\begin{document}

\setcounter{page}{301}
\publyear{2021}
\papernumber{2075}
\volume{182}
\issue{3}

 \finalVersionForARXIV
%% \finalVersionForIOS

\title{Query-Points Visibility Constraint Minimum Link Paths\\  in Simple Polygons}

\author{Mohammad Reza Zarrabi\thanks{Address for correspondence: Faculty of Electrical Engineering and Computer Science,
                          Tarbiat Modares University, Tehran, Iran. \newline \newline
          \vspace*{-6mm}{\scriptsize{Received  September 2020; \ revised August 2021.}}}
         \\
Department of Electrical and Computer Engineering \\
Tarbiat Modares University \\
Tehran, Iran\\
m.zarabi@modares.ac.ir
\and
  Nasrollah Moghaddam Charkari\\
  Department of Electrical and Computer Engineering \\
Tarbiat Modares University \\
Tehran, Iran
 } %%

\maketitle

\runninghead{M.R. Zarrabi and N.M. Charkari}{Query-Points Visibility Constraint Minimum Link Paths in Simple Polygons}

\begin{abstract}
We study the query version of
constrained minimum link paths between two points inside a simple polygon $P$ with $n$ vertices
such that there is at least one point on the path, visible from a query point.
The method is based on partitioning $P$ into a number of faces of equal link distance from a point, called a link-based shortest path map (SPM).
Initially, we solve this problem for two given points $s$, $t$ and a query point $q$.
Then, the proposed solution is extended to a general case for
three arbitrary query points $s$, $t$ and $q$.
In the former, we propose an algorithm with $O(n)$ preprocessing time.
Extending this approach for the latter case, we develop an algorithm with $O(n^3)$ preprocessing time.
The link distance of a $q$-$visible$ path between $s$, $t$ as well as the path
are provided in time $O(\log n)$ and $O(m+\log n)$, respectively, for the above two cases, where $m$ is the number
of links.
\end{abstract}

\begin{keywords}
\emph{Computational Geometry; Minimum Link Path; Shortest Path Map; Map Overlay}
\end{keywords}

\section{Introduction}
One of the problems in the field of Robotics and Computational Geometry is finding a
\emph{minimum link path} between two points in a simple polygon.
A minimum link path between two points $s$ and $t$ is a chain of line segments (\emph{links})
connecting them inside a simple polygon $P$ with $n$ vertices that has the minimum number of links.
The \emph{link distance} between $s$ and $t$ is defined as the number of links in a minimum link path.
Finding a minimum link path between two \emph{fixed} points inside
a simple polygon was first studied by Suri \cite{Suri_1986}.
He introduced an $O(n)$ time algorithm for this problem.
Afterwards, Ghosh \cite {Ghosh_1991} presented an alternative algorithm, which also
runs in $O(n)$ time.
To solve this problem in a polygonal domain, Mitchell
et al. \cite {Mitchell_1992} proposed an incremental algorithm that runs in time $O(n^2 \log^2 n)$, where
$n$ is the total number of edges of the obstacles.
On the other hand, a more general framework was established for minimum link paths by Suri \cite {Suri_1990}
based on \emph{Shortest Path Map} \emph{(SPM)}.
By the construction of SPM from a fixed point, the simple polygon $P$ is divided into \emph{faces} of equal
link distance from that point in linear time.
With this property, Arkin et al. \cite {Arkin_1995} developed an algorithm for computing the link distance
between two arbitrary \emph{query} points inside $P$.
The algorithm computes window partitioning (SPM) of $P$
from every vertex of $P$ and from every extension point
of the visibility graph of $P$.
It can be seen that endpoints of these windows divide edges of $P$ into $O(n^2)$ \emph{atomic} segments.
These segments satisfy the property that the \emph{combinatorial type} of $SPM(x)$ is the same for all points $x$ in the interior of each segment.
Thus, the algorithm requires $O(n^3)$ preprocessing time
and answers a link distance query in $O(\log n)$ time.

In many applications, it is required for a robot to have direct visibility from a viewpoint during its motion
\cite {Chen_1996}.
Some examples are moving guards, resource collectors, wireless communications, etc.
Minimum link paths have important applications in Robotics
since turns are costly while straight line movements are inexpensive in robot motion planning.
In the minimum link paths problem with point visibility constraint, the aim is to find a minimum link path between two points $s$ and $t$ such that there is at least one point on the path from which a given viewpoint $q$ is visible
(a $q$-$visible$ path \cite {Zarrabi_2019}).

The constrained version of minimum link paths problem
was studied in \cite {Zarrabi_2019} for three fixed points $s$, $t$ and $q$ inside a simple polygon.
In this paper, we study the query version of the problem inside a simple polygon $P$ with $n$ vertices for two cases.
First, suppose that two fixed points $s$ and $t$ are given, the goal is to find a $q$-$visible$ path
between $s$ and $t$ for an arbitrary query point $q$.
Second, we consider the same problem for three arbitrary query points $s$, $t$ and $q$.
We propose two algorithms with $O(n)$ and $O(n^3)$ preprocessing time for the above cases, respectively.
The link distance of a $q$-$visible$ path between $s$, $t$ as well as the path
are provided in time $O(\log n)$ and $O(m+\log n)$, respectively, for both cases, where $m$ is the number
of links.

Similarly, a $q$-$visible$ path is defined for the Euclidean metric.
In this case, the query version of the shortest Euclidean path problem for two fixed points $s$, $t$ and a query point $q$
inside a simple polygon with $n$ vertices was studied in \cite {Khosravi_2007}.
The given algorithm preprocesses the input in $O(n^3)$ time and provides $O(\log n)$ query time.
A simpler form of the problem was studied in \cite {Khosravi_2005}
in which the goal is to find the shortest Euclidean path from $s$ to view $q$
(without going to a destination point).
The algorithm requires $O(n^2)$ preprocessing time
and answers a query in $O(\log n)$ time.
Arkin et al. \cite {Arkin_2016} improved the preprocessing time to $O(n)$ for simple polygons.
Also, they built a data structure of size $O(n^2 2^{\alpha(n)} \log n)$ that can answer each query in $O(n \log^2 n)$ time
for a polygonal domain with $h$ holes and $n$ vertices, where $\alpha(n)$ is the inverse Ackermann function.
Recently, a new data structure of size $O(n \log h + h^2)$ was presented for this problem
that can answer each query in $O(h \log h \log n)$ time \cite {Wang_2019}.

The main differences between approaches in the Euclidean metric and the link distance metric are as follows.
Optimal paths that are unique under the Euclidean metric need not be unique under the link distance metric.
Also, Euclidean shortest paths only turn at reflex vertices while minimum link paths can turn anywhere.
Thus, minimum link paths problems are usually more difficult to solve than equivalent Euclidean shortest path problems.

The main idea of the proposed algorithms given in this paper is to consider an \emph{edge} of the visibility polygon
(called $e_q$)
for a query point $q$ as a separator chord inside $P$,
i.e., if $s$ and $t$ lie in different sides of such a chord, an optimal link path between them will be the answer. Otherwise, a $q$-$visible$ path should have a non-empty intersection with the side of the chord containing $q$.
Therefore, the problem can be reduced to find an appropriate edge and optimal contact
points between a $q$-$visible$ path and the other side of the edge.
To answer the queries efficiently, we preprocess the input
using \emph{map overlay} \cite {Finke_1995}, \emph{point location} \cite {Edelsbrunner_1986},
\emph{ray shooting} \cite {Guibas_1987} and \emph{shortest path map} \cite {Suri_1990} techniques.

In Section 2, we introduce the problem definition and notation.
Section 3 gives the basic lemmas and definitions.
Section 4 shows the main idea and flow of the algorithm.
Section 5 describes our algorithm for \emph{single} query point, and Section 6
generalizes this algorithm to \emph{triple} query points.
Section 7 concludes with some open problems.

\section{Problem definition and notation}

Let $P$ be a simple polygon in the plane with $n$ vertices.
For three points $s$, $t$ and $q$ inside $P$,
the goal is to preprocess the input to answer
two types of queries:
\begin{enumerate}
\itemsep=0.95pt
\item [{$1)$}] Given a query point $q$, find all $q$-$visible$ paths between
fixed points $s$ and $t$ in $P$ (single query).
\item [{$2)$}] Given three query points $s$, $t$ and $q$,
find a $q$-$visible$ path between $s$ and $t$ in $P$ (triple query).
\end{enumerate}

We use the following notation throughout the paper:
\begin{enumerate}
\itemsep=0.95pt
\item [{$\bullet$}]  $V(x):$ the visibility polygon of a point $x$ $\in$ $P$
\item [{$\bullet$}]  $\pi_L(x,y):$ a minimum link path from a point $x$ to a point $y$ inside $P$
\item [{$\bullet$}]  $\pi_E(x,y):$ the shortest Euclidean path from a point $x$ to a point $y$ inside $P$
\item [{$\bullet$}]  $MLP(x,y,q):$ a $q$-$visible$ (minimum link) path from a point $x$ to a point $y$ inside $P$
\item [{$\bullet$}]  $|X|:$ the link distance of a minimum link path $X$
\item [{$\bullet$}]  $n(X):$ the number of members of a set $X$
\item [{$\bullet$}]  $Pocket(x):$ invisible regions of $P$ from a point $x$, which are separated from $V(x)$
\item [{$\bullet$}]  $SPM(x):$ the \emph{shortest path map} (window partition) of $P$ with respect to a point or
                     line segment $x$ \cite {Suri_1990}
\end{enumerate}

More precisely, we are looking for a minimum link path between $s$ and $t$ that should have
a non-empty intersection with $V(q)$ for both cases.
Each region of $Pocket(q)$ has exactly one edge in common with $V(q)$, called an \emph{edge} of $V(q)$.
However, there is no need to compute all edges of $V(q)$.
Indeed, a single edge $e_q$ of $V(q)$ would be sufficient to find a $q$-$visible$ path (not any such edge, see Section 4.1).

Without preprocessing the query can be answered in linear time \cite {Zarrabi_2019}, but here
our goal is to achieve a logarithmic query time using a preprocessing
(this would be optimal as indicated in \cite {Arkin_1995}).
We define two types of output for either case mentioned above: one to find out $\lvert MLP(s,t,q) \rvert$ in
$O(\log n)$ time, and another to report $MLP(s,t,q)$ in $O(m+\log n)$ time, where $m=\lvert MLP(s,t,q) \rvert$.

\section{Basic lemmas and definitions}
A \emph{planar subdivision} is a partition of the 2-dimensional plane
into finitely many vertices, edges and faces $(V, E, F)$.
One type of planar subdivision is \emph{simply connected planar subdivision} \emph{(SCPS)} with an additional restriction:
any closed paths lying completely in one region of a SCPS can
be topologically contracted to a point. Therefore, a region in a SCPS cannot contain any other regions \cite {Finke_1995}.
According to \emph{Euler}'s formula for SCPS, we have the following equation: $n(V)-n(E)+n(F)=2$.

As stated above, the notion of SPM introduced in \cite {Suri_1990} is central to our discussion.
Indeed, $SPM(x)$ denotes the SCPS of $P$ into \emph{faces}
with the same link distance to a point or line segment $x$.
$SPM(x)$ has an associated set of \emph{windows}, which are chords of $P$ that serve as boundaries between
adjacent faces.
Starting and end points of a window $w$ will be denoted by $\alpha(w)$ and $\beta(w)$, respectively.
We have the following two lemmas as the basic fundamental properties of SPM:

\begin{lemma}\label{Linear}
For any point or line segment $x$ $\in$ $P$, $SPM(x)$ can be constructed in $O(n)$ time.
\end{lemma}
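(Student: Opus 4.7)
The plan is to follow the classical BFS-style construction of Suri \cite{Suri_1990}. First, I would triangulate $P$ in linear time (using Chazelle's linear-time triangulation algorithm for simple polygons), and compute the visibility polygon $V(x)$ of the source $x$, which is a point or a line segment. For both cases, $V(x)$ can be obtained in $O(n)$ time by a standard stack-based sweep over the triangulated polygon. This visibility polygon is precisely the face of link distance $1$, and its boundary consists of edges of $P$ together with a collection of chords (\emph{windows}), each of which separates $V(x)$ from one pocket of $P \setminus V(x)$.

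Next, I would build the remaining faces layer by layer in a breadth-first manner. Suppose all faces of link distance at most $k$ have been constructed; their outer boundary is a disjoint union of windows $w_1, w_2, \ldots$. For each such window $w$, I would compute the visibility polygon of $w$ restricted to the pocket it bounds; the union of these visibility regions, minus what has already been covered, gives the face of link distance $k+1$. The new windows arising on the boundary of this face feed the next round. Since each window is a chord of $P$ whose endpoints lie on edges of $P$, and since the windows of successive layers are pairwise non-crossing, they collectively form a forest structure rooted at the windows of $V(x)$.

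The key to linearity is an amortized charging argument. Each pocket bounded by a window $w$ at level $k$ is a sub-polygon of $P$ whose boundary consists of a portion of $\partial P$ plus the chord $w$; the visibility polygon of $w$ inside this pocket can be computed in time proportional to the number of vertices of the pocket. Because the pockets at a fixed level are pairwise disjoint, and because every vertex of $P$ belongs to only a constant number of pockets across all levels (it is charged at most once as a ``new'' reflex vertex generating a window and at most a constant number of times as a polygon vertex on the boundary of a face), the total work summed over all levels is $O(n)$. A similar bound holds for the combinatorial complexity of $SPM(x)$ itself: the number of windows, faces, and their incident edges is $O(n)$.

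The main obstacle is the amortized analysis justifying that no vertex or edge of $P$ is processed more than a constant number of times across the BFS layers. This requires the non-crossing property of windows, together with the observation that a window is generated only at a reflex vertex of $P$ (or an endpoint of the source segment $x$), so the total number of windows is bounded linearly in $n$. Once this charging argument is in hand, together with the linear-time primitives for triangulation and single-source visibility, the $O(n)$ bound follows.
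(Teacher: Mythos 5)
Your proposal is correct and follows essentially the same route as the paper: the paper's proof simply cites Theorem~1 of Suri together with Chazelle's linear-time triangulation, and your BFS window-partition construction with the amortized charging argument is precisely the content of that cited result. You have merely unpacked the citation into its underlying algorithm, which is fine.
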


\begin{proof}
This follows from Theorem 1 in \cite {Suri_1990} and Chazelle's linear time triangulation algorithm \cite {Chazelle_1991}.
\end{proof}

\begin{lemma}\label{Intersection}
Given a point or line segment $x$ $\in$ $P$,
any line segment $L$ intersects at most three faces of $SPM(x)$
(see the proof of Lemma 3 in \cite {Arkin_1995}).
\end{lemma}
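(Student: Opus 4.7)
My plan is to reduce the geometric claim about $SPM(x)$ to a combinatorial one about the sequence of link-distance labels encountered along $L$, and then to rule out long alternations by appealing to the structure of the windows.

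First I would observe the following one-line fact: since $L$ is a line segment lying inside $P$, it is itself a legal single link. Therefore, for any two points $p, q \in L$, the segment of $L$ from $p$ to $q$ concatenated with $\pi_L(x, p)$ gives a path from $x$ to $q$ with at most $\lvert \pi_L(x,p)\rvert + 1$ links, so $d(q) \le d(p)+1$, and symmetrically $d(p) \le d(q)+1$. Hence the link-distance values taken on $L$ lie in $\{k, k+1\}$ for a single integer $k$. Moreover, any two adjacent faces of $SPM(x)$ differ in link-distance by exactly $1$, so the labels of the faces $f_1,\dots,f_r$ that $L$ traverses in order must alternate between $k$ and $k+1$.

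Next I would argue that such an alternating sequence has length at most three. Suppose for contradiction that $L$ meets four faces $f_1,f_2,f_3,f_4$ in order, with labels $k,\,k{+}1,\,k,\,k{+}1$ (the other alternation is symmetric). Then $L$ crosses three consecutive windows $w_1, w_2, w_3$, where $w_1, w_3$ are crossed from the low side to the high side and $w_2$ in the opposite direction. Each window of $SPM(x)$ emanates from a reflex vertex of $P$ and is the extension of some link of a minimum link path from $x$; the two high-side faces adjacent to $w_1$ and $w_3$ are therefore reached from faces of label $k$ through reflex vertices. The main geometric step is to use this rooted-at-reflex-vertices structure together with the fact that SPM windows do not cross to show that the portion of $L$ lying on the high side of $w_1$ cannot re-emerge into a face of label $k$ and then enter a label-$(k{+}1)$ face again along one straight line. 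Intuitively, once $L$ passes a window and moves into the "deeper" stratum, the geometry of the reflex vertex and the kernel of the adjacent face force any return to the low-label stratum to be final along $L$.

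The bookkeeping step I would skip explicitly is reproducing in full the case analysis of Arkin, Mitchell and Suri (Lemma~3 in \cite{Arkin_1995}); instead I would cite it, noting that the argument above, combined with their window-orientation argument, bounds $r$ by $3$. The main obstacle I anticipate is precisely this last step: turning the intuition that windows are radially organized around reflex vertices into a clean contradiction for the pattern $k,k{+}1,k,k{+}1$. If I had to make it fully self-contained I would proceed by picking the reflex vertices $v_1, v_3$ carrying $w_1, w_3$ and examining the funnel from $x$ to them; the line $L$ crossing both $w_1$ and $w_3$ on their high sides and also meeting a label-$k$ face in between would force two optimal paths to $f_3$ of length $k$, one through $v_1$ and one through $v_3$, contradicting the uniqueness of the window structure of $SPM(x)$.
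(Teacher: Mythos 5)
Your opening step is correct and is the standard way this argument begins: since $L$ is itself a single link, every point of $L$ is at link distance $k$ or $k+1$ from $x$ for some fixed $k$, and since faces adjacent across a window differ in value by exactly one, the faces met by $L$ carry alternating labels. Note also that the paper offers no proof of its own here --- its entire justification is the parenthetical citation of Lemma~3 of Arkin, Mitchell and Suri --- so deferring the hard geometric step to that reference is consistent with what the paper actually does.

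The gap is in your fallback sketch, in two places. First, the alternation observation does not by itself reduce the problem to excluding the pattern $k,\,k{+}1,\,k,\,k{+}1$ on four \emph{distinct} faces: the traversal sequence may revisit faces, and a traversal such as $g, f_1, g, f_2, g, f_3$ --- one non-convex depth-$k$ face $g$ with three distinct depth-$(k{+}1)$ children, each entered and left once --- involves four distinct faces while never exhibiting your forbidden pattern on distinct faces. What actually has to be shown (and is the content of the cited lemma) is that the points of $L$ at the minimum depth $k$ form a single subsegment lying in a single face; the two remaining pieces of $L$ then each lie in one depth-$(k{+}1)$ face, which is where the bound of three comes from. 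Second, your proposed contradiction --- two $k$-link paths to the same face through different reflex vertices ``contradicting the uniqueness of the window structure'' --- cannot work as stated, because minimum link paths are not unique under the link metric (the introduction of this very paper stresses exactly this point), and $SPM(x)$ does not forbid a point from being reachable by two distinct optimal paths turning at different reflex vertices. A correct self-contained route instead uses that a point at depth $k+1$ is invisible from every point of depth at most $k-1$ (otherwise its depth would be at most $k$), applies this to the penultimate turn points of the $k$-link paths reaching the two alleged depth-$k$ points of $L$, and derives the contradiction from the simple connectivity of $P$; this is essentially the case analysis you would be citing from Arkin et al.
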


The value of each face of $SPM(x)$ is defined as the link distance from $x$ to any points of that face.
These values are added to each face during the construction of $SPM(x)$ in Lemma~\ref{Linear}.
Let $||SPM(x)||$ denote the number of faces of $SPM(x)$.
Also, let $F_x(i)$ be a face of $SPM(x)$ and let $||F_x(i)||$ denote the value of $F_x(i)$,
where $1 \leq i \leq ||SPM(x)||$
(see Figure~\ref{fig:SPM}).

\begin{figure}[h]
	\centering
	\includegraphics[width=13.8cm,keepaspectratio=true]{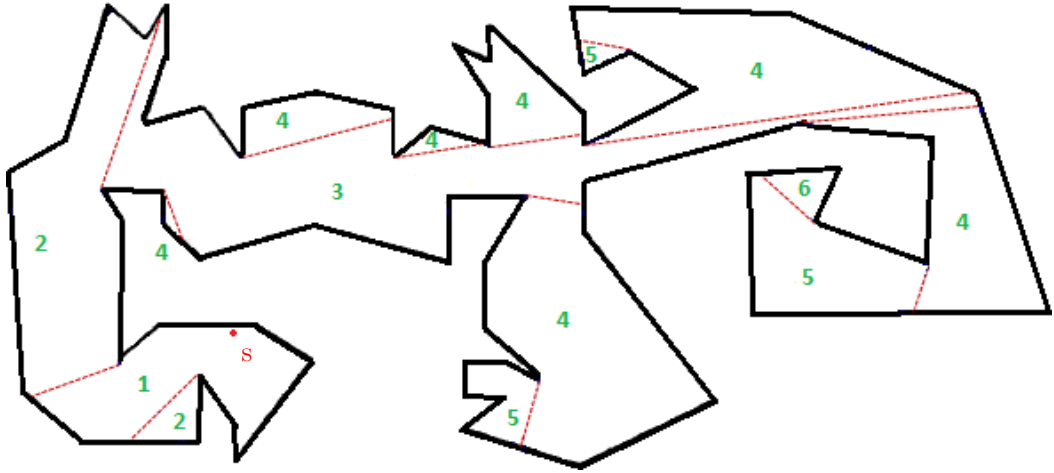}
	\caption{The value of each face $F_s(i)$ of $SPM(s)$, where $1 \leq i \leq ||SPM(s)||=15$}
	\label{fig:SPM}\vspace*{-2mm}
\end{figure}

\medskip
The \emph{window tree} $WT(x)$ denotes the planar dual of $SPM(x)$.
It has a node for each face and an arc between two nodes if their faces share an edge.
$WT(x)$ is rooted at $x$ and each node of it is labeled with a window.
The computation of $WT(x)$ takes $O(n)$ time \cite {Suri_1990}.
According to \cite {Suri_1990}, $F_x(i)$ is generated by its corresponding window $w_i \in WT(x)$ and vice versa, i.e.,
we define $G_x(w_i)=F_x(i)$ and $w_i=G^{-1}_x(F_x(i))$ ($1 \leq i \leq ||SPM(x)||$).
If $i$ is not specified, we define $G_x(w(x))=F_x$ and $w(x)=G^{-1}_x(F_x)$, where
$w(x)$ is a window of $WT(x)$ and $F_x$ is the corresponding face of $SPM(x)$.

\medskip
Consider the \emph{shortest path maps} of $P$ with respect to the points $s$ and $t$.
To find the intersection of the two maps $SPM(s)$ and $SPM(t)$, the \emph{map overlay} technique is employed.
One of the well-known algorithms for this purpose was introduced by Finke and Hinrichs \cite {Finke_1995}.
The algorithm computes the overlay of two SCPSs in optimal time
$O(n + r)$, where $n$ is the total number of edges of the two SCPSs and $r$ is the number of intersections
between the edges in the worst case.
Both the input and output SCPSs are represented
by the \emph{quad view} data structure (a trapezoidal decomposition of a SCPS) \cite {Finke_1995}.

The intersection of $SPM(s)$ and $SPM(t)$ creates a new SCPS inside $P$.
We call each face of this SCPS a \emph{Cell}.
Also, $Ce$ is defined as a set such that each member of it points to a \emph{Cell}.
By the construction of \emph{Cells}, each \emph{Cell} ($Ce(i)$) is the intersection of two faces, one face from $SPM(s)$
and another face from $SPM(t)$, i.e., $Ce(i)$= $F_s(j) \cap F_t(k)$.
The value of each \emph{Cell} (the value of $Ce(i)$) is defined as $||Ce(i)||= ||F_s(j)|| + ||F_t(k)||$
($1 \leq i \leq n(Ce)$, $1 \leq j \leq ||SPM(s)||$ and $1 \leq k \leq ||SPM(t)||$).
These values are added to each \emph{Cell} during the overlaying, and will later be used for computing $|MLP(s,t,q)|$
(see Figure~\ref{fig:Cells}).

\begin{figure}[h]
	\centering
	\includegraphics[width=13.8cm,keepaspectratio=true]{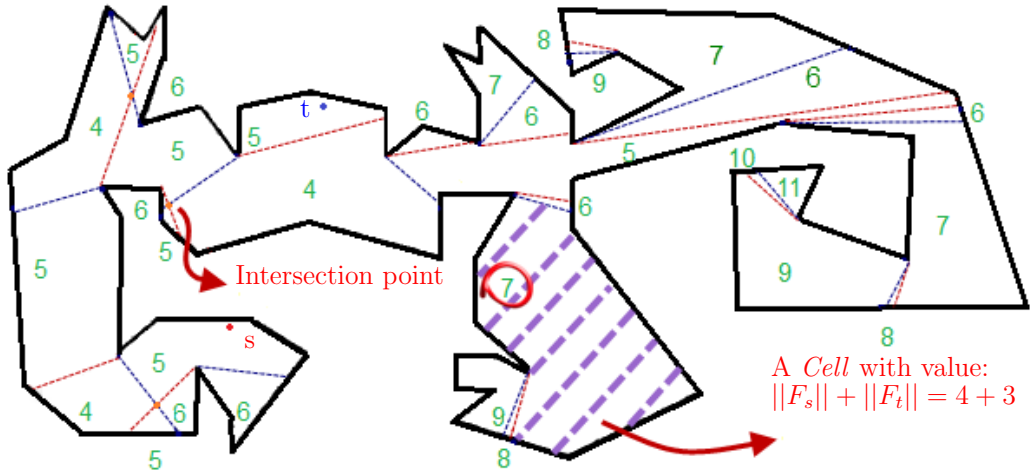}\vspace*{-1mm}
	\caption{Overlaying $SPM(s)$ and $SPM(t)$ to create \emph{Cells}, and their values}
	\label{fig:Cells}\vspace*{-2mm}
\end{figure}

Let $W_s$ and $W_t$ be the set of windows of $SPM(s)$ and $SPM(t)$, respectively, and $W=(W_s \cup W_t)$.
The following lemma shows the number of \emph{Cells} ($n(Ce)$):

\begin{lemma}\label{Cells_Num}
Overlaying $SPM(s)$ and $SPM(t)$ inside $P$ creates $n(W)+r+1$ Cells, where $r$ is the number of intersections
inside $P$ (excluding the boundary of $P$) between the windows of $W_s$ and $W_t$.
\end{lemma}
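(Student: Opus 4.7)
The plan is to prove the count by an incremental argument: start with $P$ as a single interior face and insert the windows of $W_s$ followed by the windows of $W_t$ one at a time, carefully tracking how many new faces each insertion creates. The key observation that makes this work is that a chord crossing $k$ edges of an existing SCPS in its interior gets partitioned into $k+1$ sub-segments, each of which splits one face of the current subdivision into two, thereby producing exactly $k+1$ new faces.

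First I would establish that two windows belonging to the same SPM cannot cross in the interior of $P$. This is because $SPM(s)$ (respectively $SPM(t)$) is itself a valid SCPS of $P$ by \cite{Suri_1990}, whose windows form the shared boundaries between adjacent faces; such windows may meet only at endpoints on $\partial P$ or at vertices of $P$, never transversally in the interior. Consequently, inserting the windows of $W_s$ into $P$ one at a time, each new window is a chord whose interior is contained in exactly one current face, so its insertion increases the face count by $1$. Starting from the single face of $P$, after inserting all $n(W_s)$ windows we obtain $n(W_s)+1$ faces.

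Next I would insert the windows of $W_t$ into this subdivision, again one at a time. For a window $w \in W_t$, let $k_w$ denote the number of times $w$ meets the windows of $W_s$ strictly inside $P$; by the non-crossing property applied within $W_t$, these are its only interior intersections with the current arrangement, and $\sum_{w \in W_t} k_w = r$ by the definition of $r$. The $k_w$ interior crossings partition $w$ into $k_w+1$ sub-segments, each of which lies in a single existing face and splits it in two, so inserting $w$ adds $k_w+1$ faces. Summing over $w \in W_t$ contributes $n(W_t)+r$ new faces. Adding everything up,
$$n(Ce) \;=\; 1 + n(W_s) + n(W_t) + r \;=\; n(W) + r + 1.$$

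The main obstacle I expect is not the counting itself but the justification of the non-crossing property for windows within the same SPM, together with the careful treatment of degenerate incidences (for instance, a window of $W_t$ that passes through an endpoint of a window of $W_s$ on $\partial P$, or windows of $W_s$ and $W_t$ that share a common endpoint). Such tangential meetings on the boundary contribute no interior crossings and therefore do not alter the increment $k_w + 1$, so the argument remains valid under general position; the generic case can be invoked, or the degeneracies can be resolved by an infinitesimal perturbation that preserves both $n(W)$ and $r$.
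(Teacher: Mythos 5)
Your incremental argument is correct and reaches the right count, but it takes a genuinely different route from the paper. The paper's proof is a single global computation via Euler's formula for SCPS: it writes $n(Ce)=n(E)-n(V)+1$ after discarding the outer face, splits the vertices and edges into a boundary group (showing $n(E_1)-n(V_1)=0$, using that each $\alpha(w)$ sits on a reflex vertex and each new $\beta(w)$ adds one vertex and one edge) and an interior group (showing $n(E_2)-n(V_2)=n(W)+r$ by counting how the $r$ crossings subdivide the windows), and then adds the two contributions. Your proof instead builds the overlay by inserting chords one at a time and tracking the face increment $k_w+1$ per chord; the non-crossing property of windows within a single SPM plays for you exactly the role that the paper's ``each intersection corresponds to one window of $W_s$ and one of $W_t$'' plays for it. Your version is arguably more self-contained (it does not need Euler's formula and makes the topological reason for each $+1$ explicit), while the paper's version localizes all the bookkeeping into one identity. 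The one place where you are slightly weaker is the degenerate case $W_s\cap W_t\neq\emptyset$: the paper handles coincident windows explicitly (counting the shared chord once, which is consistent with $n(W)=n(W_s\cup W_t)$), whereas your perturbation remark does not quite cover it --- perturbing two coincident windows apart changes $n(W)$ from $n(W_s)+n(W_t)-n(W_s\cap W_t)$ to $n(W_s)+n(W_t)$, so the perturbed configuration no longer computes the same quantity. The fix is easy in your framework (a window of $W_t$ that coincides with an already-inserted window of $W_s$ contributes $0$ new faces, and it is also absorbed into the union in $n(W)$, so the identity survives), but it should be stated rather than delegated to a perturbation.
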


\begin{proof}
Let $(V, E, F)$ be the triple sets of the new SCPS inside $P$ after overlaying.
If we omit the outer face of $P$ from $n(V)-n(E)+n(F)=2$, we conclude: $n(Ce)=n(E)-n(V)+1$.
All the vertices and edges of the new SCPS, $(V, E)$ can be divided into two groups, $(V_1, E_1)$
on the boundary of $P$ and $(V_2, E_2)$ inside $P$, where $n(V)=n(V_1)+n(V_2)$ and $n(E)=n(E_1)+n(E_2)$.
Let $w \in W$.
For the first group, $n(V_1)$ and $n(E_1)$ are the $n$ vertices and $n$ edges of $P$, respectively,
plus the number of those $\beta(w)$ not already counted among the vertices of $P$. Thus, $n(E_1)-n(V_1)=0$.
Note that $\alpha(w)$ is always on a reflex vertex of $P$ and does not change $n(V_1)$ or $n(E_1)$.
For the second group, consider the windows of $W$.
Based on the definition of windows, there is no intersection between the windows of SPM.
Therefore, each of the $r$ intersections corresponds to only two windows, one from $W_s$ and another from $W_t$.
Suppose that there is no common window between $W_s$ and $W_t$, i.e., $n(W)=n(W_s)+n(W_t)$.
In this case, each intersection creates four segments for two distinct windows.
It is easy to deduce that by induction, the total number of edges on the windows of $W$ is $n(W_s)+n(W_t)+2r$.
But, $n(V_2)=r$ and $n(E_2)-n(V_2)=n(W_s)+n(W_t)+r=n(W)+r$.
For the case $n(W_s \cap W_t)>0$, we have to consider only one of the two coincident windows due to the fact that
they do not create a new segment. Therefore, $n(E_2)-n(V_2)=n(W_s)+n(W_t)-n(W_s \cap W_t)+r=n(W_s \cup W_t)+r$.
Finally, $n(Ce)=n(E_1)-n(V_1)+n(E_2)-n(V_2)+1=n(W)+r+1$.
\end{proof}

As depicted in the example in Figure~\ref{fig:Cells},
$n(Ce)=32$, $n(W_s)=n(W_t)=14$, $n(W_s \cup W_t)=28$ and $r=3$.
The number of the windows of SPM inside $P$ depends on the number of the reflex vertices of $P$.
Since the total internal angles of $P$ is $(n-2)*180$ degrees, $n(W_s)$ and $n(W_t) \leq n-3$.
According to Lemma~\ref{Intersection}, each window of $W_s$ intersects at most two windows
of $W_t$ and vice versa. Thus, each window in $W_s$ or $W_t$ contains at most two intersection points inside $P$, i.e.,
$r \leq 2*\min(n(W_s),n(W_t)) \leq 2(n-3)$.
Based on Lemma~\ref{Cells_Num} and the above argument, we have:\smallskip

$n(Ce)=n(W)+r+1 \leq n(W_s)+n(W_t)+r+1 \leq (n-3)+(n-3)+2(n-3)+1=4n-11=O(n)$.\smallskip

\begin{lemma}\label{Cells_Construction}
Construction of Cells and computation of their values for two points $s$ and $t$ inside $P$ can be done in $O(n)$ time.
\end{lemma}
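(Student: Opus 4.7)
The plan is to assemble the lemma from three linear-time ingredients: build $SPM(s)$ and $SPM(t)$ separately, overlay them, and then annotate each resulting \emph{Cell} with the sum of the two parent-face values. Since Lemma~\ref{Linear} already gives $SPM(x)$ in $O(n)$ time for any point $x \in P$, the first step takes $O(n)$ and produces each SPM together with its face values $||F_s(j)||$ and $||F_t(k)||$, which by the comment following Lemma~\ref{Linear} are attached to the faces during construction.

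Next I would feed the two maps (each expressed in the quad-view representation that the paper has already singled out) into the Finke--Hinrichs map-overlay algorithm~\cite{Finke_1995}, which runs in $O(n+r)$ time, where $r$ is the number of intersections between windows of $W_s$ and $W_t$ interior to $P$. The discussion right before the lemma establishes that $n(W_s), n(W_t) \le n-3$ and, via Lemma~\ref{Intersection}, $r \le 2(n-3)$; consequently the overlay step is $O(n)$ and, by Lemma~\ref{Cells_Num} together with that same bound, the output has $n(Ce) \le 4n-11 = O(n)$ faces. This takes care of the combinatorial construction of the set $Ce$.

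The main obstacle I expect is the \emph{bookkeeping} needed to stamp each $Ce(i)$ with its value $||F_s(j)||+||F_t(k)||$. The point is that when the overlay algorithm creates a new face $Ce(i)$ it must know the unique $j$ and $k$ such that $Ce(i) = F_s(j) \cap F_t(k)$. The Finke--Hinrichs sweep maintains, for every trapezoid of the running overlay, pointers back to the pair of input faces that cover it, and a newly generated face inherits these two pointers in $O(1)$; walking these pointers to read the two precomputed face values and summing them costs $O(1)$ per \emph{Cell}. Summed over the $O(n)$ cells this adds only $O(n)$, so if one is careful to tag trapezoids with face pointers throughout the sweep (rather than recomputing locations at the end via point-location) the total running time remains $O(n)$, finishing the proof.
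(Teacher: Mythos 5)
Your proposal is correct and follows essentially the same route as the paper's proof: construct $SPM(s)$ and $SPM(t)$ in $O(n)$ time via Lemma~\ref{Linear}, overlay them with the Finke--Hinrichs algorithm in $O(n+r)=O(n)$ time using the bound $r \leq 2(n-3)$, and then assign the value $||F_s(j)||+||F_t(k)||$ to each of the $O(n)$ \emph{Cells}. Your discussion of how the overlay's trapezoids carry pointers to their covering input faces is a welcome elaboration of a step the paper states only as ``assigning \dots takes $O(n)$ time,'' but it does not change the argument.
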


\begin{proof}
The construction of the two maps $SPM(s)$ and $SPM(t)$ can be done in $O(n)$ time (Lemma~\ref{Linear}).
The total complexity of the two created maps inside $P$ (total number of edges and windows) is $O(n)$.
Thus, by the \emph{map overlay} technique one can construct \emph{Cells} in $O(n+r)$ time \cite {Finke_1995}, but since
$r \leq 2(n-3)$, this is $O(n)$ time.
On the other hand, $n(Ce)=O(n)$. Thus, assigning $||F_s(j)|| + ||F_t(k)||$ to $||Ce(i)||$ takes $O(n)$ time
($1 \leq i \leq n(Ce)$, $1 \leq j \leq ||SPM(s)||$ and $1 \leq k \leq ||SPM(t)||$).
\end{proof}

The \emph{visibility graph} of $P$, denoted by $V_G(P)$ is an undirected graph of the visibility relation on the
vertices of $P$. $V_G(P)$ has a node for every vertex of $P$ and an
edge for every pair of visible vertices inside $P$.
Consider a visibility graph edge $e$.
The \emph{extension points} of $e$ refer to the intersections of the boundary of $P$ with the line containing $e$.
$V_G(P)$ and its extension points can be computed in time proportional to the size of $V_G(P)$, i.e., $O(E)$,
where $E$ is the number of edges in $V_G(P)$ \cite {Hershberger_1989}.
Indeed, the main algorithm in \cite {Hershberger_1989} outputs the edges of $V_G(P)$ in sorted order
around each vertex.

\begin{lemma}\label{Rayshooting}
We can preprocess $P$ in $O(E)$ time so that the following ray shooting query can be answered in $O(\log n)$ time:
given a vertex $v$ of $P$ and a direction $\theta$, find the intersection points of the ray from $v$ in
direction $\theta$ with the boundary of $P$, where the ray exits the interior of $P$.
\end{lemma}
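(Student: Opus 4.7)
The plan is to piggyback on the visibility graph computation of \cite{Hershberger_1989}, which in $O(E)$ time produces $V_G(P)$ together with its extension points and, crucially, outputs the visibility edges around every vertex in angular order. For each vertex $v$ of $P$, I would store a sorted circular array $A_v$ of angular keys. The keys come from two sources: the direction from $v$ to every other vertex visible from $v$, and the direction from $v$ to every extension point that lies on the boundary of the visibility polygon $V(v)$. Between two consecutive keys I would store a single $O(1)$-size record identifying the edge $e$ of $P$ that bounds $V(v)$ in that angular gap. The total size of all the $A_v$ is proportional to the total number of edges of the visibility polygons $V(v)$ taken over all vertices $v$, which is $O(E)$, and they are built directly from the sorted output of \cite{Hershberger_1989} in $O(E)$ time.

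For a query with vertex $v$ and direction $\theta$, I would binary search $A_v$ in $O(\log n)$ time to find the angular gap containing $\theta$. The record stored for that gap already names the polygon edge $e$ through which the ray from $v$ in direction $\theta$ must exit, and the exact exit point is obtained by intersecting the ray with $e$ in $O(1)$ additional time.

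The main obstacle is justifying that, in each angular gap between two consecutive keys at $v$, the boundary of $V(v)$ consists of a piece of a single edge of $P$, so that an $O(1)$-size gap record suffices. Between two angularly adjacent visible features seen from $v$, the boundary of $V(v)$ is either (i) a subsegment of one edge of $P$ joining two consecutive visible vertices of $P$, or (ii) a subsegment of one edge of $P$ terminating at an extension point (the foot of a window of $V(v)$). In case (ii) the extension point itself is one of the keys in $A_v$, so the adjacent gap is still bounded on one side by that extension point and on the other by either another extension point or a visible vertex, and in either case the relevant edge of $P$ is determined in $O(1)$ from the extension-point data returned by \cite{Hershberger_1989}. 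Establishing this dichotomy is the technical core; once it is in place, the $O(E)$ preprocessing bound and the $O(\log n)$ query bound follow from standard sorted-array binary search.
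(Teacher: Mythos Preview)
Your proposal is correct and follows essentially the same route as the paper: build $V_G(P)$ with its extension points via \cite{Hershberger_1989}, store the visibility edges around each vertex in angular order, binary-search this cyclic list for the direction $\theta$, and intersect the ray with the unique polygon edge bounding $V(v)$ in the resulting angular wedge. The paper phrases the last step as reading off the edge $b$ from the extension points of the two bracketing visibility edges, whereas you store the edge explicitly in a gap record, but this is the same mechanism.

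One small remark: your second source of keys (directions from $v$ to extension points on $\partial V(v)$) is redundant, since every such extension point is collinear with $v$ and the reflex vertex generating the window, hence contributes no new angle. So your array $A_v$ is just the angularly sorted list of vertices visible from $v$, exactly as in the paper; the dichotomy you work out to justify the single-edge-per-gap property then reduces to the observation that no vertex of $P$ can lie on the visible boundary strictly between two angularly consecutive visible vertices, which is immediate.
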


\begin{proof}
After computing $V_G(P)$ and its extension points in $O(E)$ time, we store the sorted edges of $V_G(P)$ according to the angle between them in some fixed directions (e.g., clockwise) about each vertex.
For a given vertex $v$, the binary search can be used to find two adjacent edges of $V_G(P)$
so that the cone defined by the two edges contains only
the ray emanating from $v$ in direction $\theta$. By finding these edges, one can specify the two
extension points on an edge $b$ of $P$. So, $b$ is used to find the intersection point of the ray in constant time.
If the ray coincides with the edges of $V_G(P)$, all intersection points are returned.
\end{proof}

A more complicated algorithm is presented in \cite {Guibas_1987} for the \emph{ray shooting} problem in a simple polygon.
This algorithm has $O(n)$ (instead of $O(E)$) preprocessing time and the same query time ($O(\log n)$).
Note that $E=\Omega(n)=O(n^2)$.
We must use this algorithm for single query (Section 5),
but Lemma~\ref{Rayshooting} suffices for triple query (Section 6).

\section{Main idea}
The following cases ($Q(x)$) may occur for a query point $q$:
\begin{enumerate}
\item [{$Q(a)$}] At least one of $s$ or $t$ is visible from $q$, i.e., $s$ or $t$ in $V(q)$.
                 In this case, $MLP(s,t,q)=\pi_L(s,t)$.

\item [{$Q(b)$}] The points $s$ and $t$ are in two different regions of $Pocket(q)$.
                 Again in this case, $MLP(s,t,q)=\pi_L(s,t)$. Since $\pi_L(s,t)$ crosses $V(q)$,
                 it is a $q$-$visible$ path.

\item [{$Q(c)$}] Both $s$ and $t$ are in the same region in $Pocket(q)$.
                 In this case, $e_q$ is the common edge between this region of $Pocket(q)$ and $V(q)$.
                 Therefore, $MLP(s,t,q)$ should have a non-empty
                 intersection with the side of $e_q$, where $q$ lies.
\end{enumerate}

\subsection{Computation of {$e_q$}}

Now, we are ready to compute $e_q$ with respect to the points $s$ and $t$ (single query and triple query).
Let $\alpha(e_q)$ and $\beta(e_q)$ be the starting and end points of $e_q$, respectively.
For any three points $s$, $t$ and $q$ inside $P$,
the last bending vertices of $\pi_E(s,q)$ and $\pi_E(t,q)$ called $v_1$ and $v_2$ (reflex vertices of $P$), respectively,
can be found in $O(\log n)$ time, after $O(n)$ time preprocessing of $P$ \cite {Arkin_1995,Guibas_1989}.
If either $v_1$ or $v_2$ does not exist, $Q(a)$ occurs.
Otherwise, if $v_1=v_2$, $Q(c)$ occurs, otherwise, $Q(b)$ occurs.
For the two cases $Q(a)$ and $Q(b)$, there is no need to compute $e_q$. But for the case $Q(c)$,
the first intersection point of the ray emanating from $\alpha(e_q)=v_1=v_2$ in direction
$\vv{q\alpha(e_q)}$ with the boundary of $P$
is specified (Lemma~\ref{Rayshooting} or \cite {Guibas_1987}).
Indeed, this intersection point is $\beta(e_q)$.
Thus, starting and end points of $e_q$ are specified and the following corollary is concluded:

\begin{corollary}\label{Eq_Computation}
Computation of $e_q$ ($\alpha(e_q)$ and $\beta(e_q)$) with respect to any two points $s$ and $t$ inside $P$ can be done in $O(\log n)$ time, after $O(E)$ or $O(n)$ time preprocessing of $P$, if necessary (case $Q(c)$).
\end{corollary}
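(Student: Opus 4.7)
The plan is to handle the case analysis $Q(a)$, $Q(b)$, $Q(c)$ explicitly and show that in each case the work reduces either to a known data-structure query on $P$ or to a single ray-shooting query. The key observation, already suggested in the paragraph above the corollary, is that the last bending vertex of a Euclidean shortest path from $q$ to either $s$ or $t$ coincides (when it exists) with $\alpha(e_q)$ in case $Q(c)$, because the sight-line from $q$ leaves $V(q)$ precisely by grazing a reflex vertex of $P$, and this reflex vertex must be the last turn of $\pi_E$ toward a point in the same pocket.

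First I would preprocess $P$ once so that, given any query point $q$, the last bending vertex of $\pi_E(s,q)$ and of $\pi_E(t,q)$ can be reported in $O(\log n)$ time; by the results cited from Arkin et al. and Guibas et al.\ this costs $O(n)$ preprocessing. Call these (possibly nonexistent) vertices $v_1$ and $v_2$. The decision logic is then immediate: if $v_1$ or $v_2$ is undefined then $q$ sees $s$ or $t$ and we are in $Q(a)$; if both exist but $v_1\neq v_2$ then $s$ and $t$ belong to different pockets of $q$ and we are in $Q(b)$; if $v_1=v_2$ then both points lie in the same pocket and we are in $Q(c)$. Each of these tests is $O(1)$ after the two queries. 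In cases $Q(a)$ and $Q(b)$ no edge $e_q$ needs to be produced and we are done.

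The only remaining work is in case $Q(c)$. Here I would set $\alpha(e_q):=v_1=v_2$; this is a reflex vertex of $P$ on the boundary of $V(q)$, and the companion endpoint $\beta(e_q)$ is, by definition of a window of $V(q)$, the first point where the ray from $\alpha(e_q)$ through $q$, continued past $q$, strikes $\partial P$. I would obtain this point with a single ray-shooting query from vertex $v_1$ in the direction $\vv{q\alpha(e_q)}$. Using the data structure of Lemma~\ref{Rayshooting} this costs $O(\log n)$ query time after $O(E)$ preprocessing; if a preprocessing budget of only $O(n)$ is available we invoke the Guibas--Hershberger structure \cite{Guibas_1987} with the same query bound. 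In either case $\alpha(e_q)$ and $\beta(e_q)$ are produced in $O(\log n)$ time.

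The main conceptual obstacle is the justification that $\alpha(e_q)$ really coincides with the last bending vertex of $\pi_E(s,q)$ (and of $\pi_E(t,q)$) in case $Q(c)$; once this is pinned down, the rest is bookkeeping. I would verify it by noting that in case $Q(c)$ both $s$ and $t$ lie in a single component of $Pocket(q)$, that $e_q$ is the unique edge of that component shared with $V(q)$, and that the Euclidean shortest path from $s$ (or $t$) to $q$ must exit this component through $e_q$ while turning only at reflex vertices of $P$; the last such turn is forced to be the reflex endpoint $\alpha(e_q)$ of $e_q$, since no other reflex vertex on the pocket's boundary lies on a shortest path to $q$. This identification secures the construction and yields the claimed bounds.
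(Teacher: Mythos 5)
Your proposal is correct and follows essentially the same route as the paper: locate the last bending vertices $v_1,v_2$ of $\pi_E(s,q)$ and $\pi_E(t,q)$ in $O(\log n)$ time after $O(n)$ preprocessing, distinguish $Q(a)$/$Q(b)$/$Q(c)$ by whether they exist and coincide, and in case $Q(c)$ recover $\beta(e_q)$ with one ray-shooting query from $\alpha(e_q)=v_1=v_2$ in direction $\vv{q\alpha(e_q)}$ (Lemma~\ref{Rayshooting} or \cite{Guibas_1987}); you even supply the justification that $v_1=v_2=\alpha(e_q)$, which the paper leaves implicit. One verbal slip: $\beta(e_q)$ lies on the extension of $\overline{q\alpha(e_q)}$ beyond $\alpha(e_q)$ into the pocket, not ``past $q$'' --- but the ray-shooting query you actually issue uses the correct direction, so nothing breaks.
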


\subsection{Case {$Q(c)$}}
The line segment $e_q$ divides $P$ into two subpolygons, only one of which contains $q$.
We define $p$ as the subpolygon containing $q$ ($e_q \in p$).
Let $Ce_p= \{Ce(i) \cap p \mid Ce(i) \cap p \neq \emptyset, 1 \leq i \leq n(Ce)\}$.
Also, $||Ce_p(j)||$ is defined to be $||Ce(i)||$, where $Ce_p(j)$ and $Ce(i)$ are the corresponding members in $Ce_p$ and $Ce$, respectively ($1 \leq j \leq n(Ce_p)$ and $1 \leq i \leq n(Ce)$).
Let $cellmin_p=\min^{n(Ce_p)}_{i=1}(||Ce_p(i)||)$ and
$Cellmin_p= \{Ce_p(i) \mid ||Ce_p(i)||=cellmin_p, 1 \leq i \leq n(Ce_p)\}$.

\begin{figure}[!h]
\vspace*{3mm}
	\centering
	\includegraphics[width=15.5cm,keepaspectratio=true]{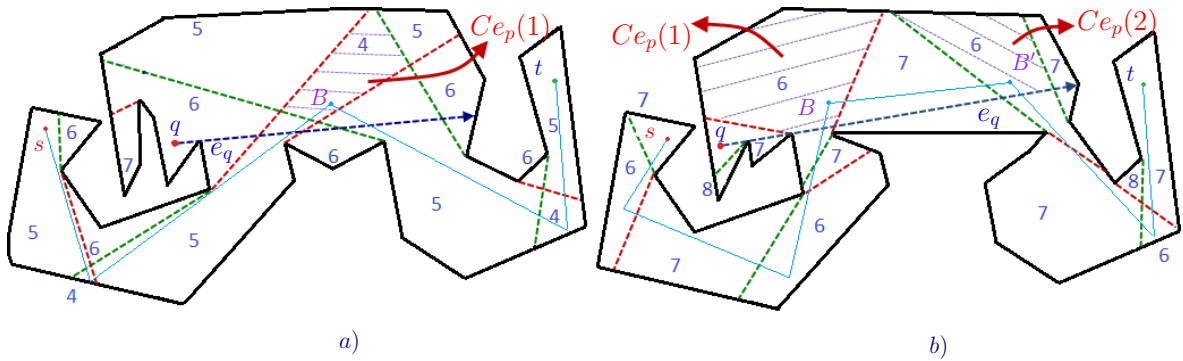}\vspace*{-2mm}
	\caption{Case $Q(c)$ and $MLP(s,t,q)$ (light blue path) for the above subcases}
	\label{fig:Bends}
\end{figure}

As mentioned above, $MLP(s,t,q) \cap p \neq \emptyset$.
According to \cite {Zarrabi_2019}, there is always a bending point $B$ (see Figure~\ref{fig:Bends}$a$)
and there are at most two bending points $B$ and $B'$
(see Figure~\ref{fig:Bends}$b$) on $MLP(s,t,q)$ inside $p$
(this follows from the triangle inequality for link distances and the fact that $s$ and $t$ lie on the same side of $e_q$,
see Lemma 3.1 in \cite {Zarrabi_2019}).
In the former case, $B$ must belong to a member of $Cellmin_p$ and
$\lvert MLP(s,t,q) \rvert=\lvert \pi_L(s,B) \rvert + \lvert \pi_L(B,t) \rvert$.
As for the latter case, $B$ and $B'$ must belong to two distinct members of $Cellmin_p$ and
$\lvert MLP(s,t,q) \rvert=\lvert \pi_L(s,B) \rvert + 1 + \lvert \pi_L(B',t) \rvert=\lvert \pi_L(s,B) \rvert + \lvert \pi_L(B,t) \rvert=\lvert \pi_L(s,B') \rvert + \lvert \pi_L(B',t) \rvert$.
Thus, for both cases, $cellmin_p=\min^{n(Ce_p)}_{i=1}\{|\pi_L(s,x)|+|\pi_L(x,t)|$ for any $x \in Ce_p(i)\}$
has been computed correctly (remember the definition of $||Ce(i)||$)
and $\min^{n(Ce_p)}_{i=1}\{|\pi_L(s,x)|+|\pi_L(x,t)|-1$ for any $x \in Ce_p(i)\}$ would not be possible for $|MLP(s,t,q)|$
(Lemma 3.1 in \cite {Zarrabi_2019}).

\subsubsection{Computation of  $Cellmin_p$  and  $cellmin_p$}

To compute $Cellmin_p$ and $cellmin_p$, we do not need to consider all members of $Ce_p$.
Indeed, we concentrate on windows of $W$ that intersect $e_q$.
As stated in Lemma~\ref{Intersection}, $e_q$ intersects
at most two windows of $W_s$ and at most two windows of $W_t$.
The status of these intersections
is verified by the position of $\alpha(e_q)$ and $\beta(e_q)$ with respect to the faces of $SPM(s)$ and $SPM(t)$.
To specify the position of $\alpha(e_q)$ and $\beta(e_q)$ with respect to the faces of $SPM(x)$
($x \in P$ is a point or line segment),
we use the \emph{point location}\footnote {For a point lying in two faces or more than one \emph{Cell},
the face or \emph{Cell} with less value is considered as the output of the \emph{point location} algorithm, respectively.}
algorithm in \cite {Edelsbrunner_1986}.
This can be accomplished by $O(n)$ time preprocessing of $SPM(x)$
(remember that $||SPM(x)||=O(n)$ and based on Lemma~\ref{Linear}, $SPM(x)$ is constructed in $O(n)$ time)
to determine which face contains $\alpha(e_q)$ and which face contains $\beta(e_q)$ in $O(\log n)$\linebreak  time.

\medskip
For $w \in W$, if ($\alpha(w)=\alpha(e_q)$ or $\alpha(w)=\beta(e_q)$) or ($\beta(w)=\beta(e_q)$),
we say that there is no intersection between $e_q$ and $w$ except for the degenerate case, where
($\beta(w)=\beta(e_q)$) and ($\alpha(e_q)$ and $\beta(e_q)$ lie in different faces, and there is no window
$w' \neq w$ from the same SPM crossing the interior of $e_q$). Note that $\beta(w)=\alpha(e_q)$ never occurs.

\medskip
Suppose that $\alpha(e_q) \in F_x(j)$ and $\beta(e_q) \in F_x(k)$.
If $j=k$, $e_q$ does not intersect any windows of $SPM(x)$.
Otherwise, if $||F_x(j)||>||F_x(k)||$ or $||F_x(j)||<||F_x(k)||$, $e_q$ only intersects the window $w_j$ or $w_k$, respectively, and if $||F_x(j)||=||F_x(k)||$, $e_q$ intersects both windows $w_j$ and $w_k$
($w_j=G^{-1}_x(F_x(j))$ and $w_k=G^{-1}_x(F_x(k))$).
After locating $w_j$ or $w_k$, if necessary,
the intersection of $e_q$ with each of them is computed in constant time.
Applying the above procedure for $SPM(s)$ and $SPM(t)$
yields at most two intersection points $s_1,s_2$ of $e_q$ with $w'(s),w''(s) \in W_s$,
and at most two intersection points $t_1,t_2$ of $e_q$ with $w'(t),w''(t) \in W_t$, respectively.
Since $n(Ce)=O(n)$ and the construction of \emph{Cells} is done in $O(n)$ time (Lemma~\ref{Cells_Construction}),
once again, we can use the \emph{point location} algorithm for \emph{Cells} \cite {Edelsbrunner_1986}.
Also, the intersection points (if any) of $w'(s),w''(s)$ with $w'(t),w''(t)$,
which are computed during the construction of \emph{Cells} can be easily determined to be inside or outside $p$.
This is done in constant time as follows:
suppose that $w'(s)$ intersects $w'(t)$ at point $I_1$. Since the position of $s_1$ and $\beta(w'(s))$ are known, the position of $I_1$ on $w'(s)$ can be determined.
If $I_1$ lies between $s_1$ and $\beta(w'(s))$, it will be inside $p$.
The same computation can be applied for the other intersection points of $w'(s),w''(s)$ with $w'(t),w''(t)$.
Let $I$ be the set of these points ($0 \leq n(I) \leq 4$).
The status of points in the set $I$ with respect to $e_q$ (inside or outside $p$)
as well as the status of the two segments $\overline{s_1s_2}$ and $\overline{t_1t_2}$ on $e_q$
(they might intersect or not)
determine $Cellmin_p$ using the \emph{point location} algorithm for \emph{Cells}
(in the degenerate cases for instance, if
$s_1$ does not exist, we replace it by $\alpha(e_q)$ or $\beta(e_q)$
depending on which one is in the same face as $s_2$).
Let $F_s$ and $F_t$ be faces of $SPM(s)$ and $SPM(t)$, respectively
(determined implicitly by the single query and triple query algorithms, but arbitrary for the moment).
For $w \in W$ and point $z$ on $w$, we define $\gamma(w,z)$
to be the point on $w$, strictly between $\beta(w)$ and $z$,
closest to $z$ among all intersections of $w$ with other windows of $W$.
If $w$ does not intersect other windows of $W$, define $\gamma(w,z)$ to be $\beta(w)$.
There are the following cases $C(x)$ (see Figure~\ref{fig:Cellmin})
for $\alpha(e_q)$ and $\beta(e_q)$ to compute $Cellmin_p$ and $cellmin_p$
(we use this fact: the intersection of two faces inside $P$ creates a \emph{simply connected region}
since $P$ does not contain any \emph{holes}):

\begin{figure}[!htbp]
	\centering
	\includegraphics[width=11.75cm,keepaspectratio=true]{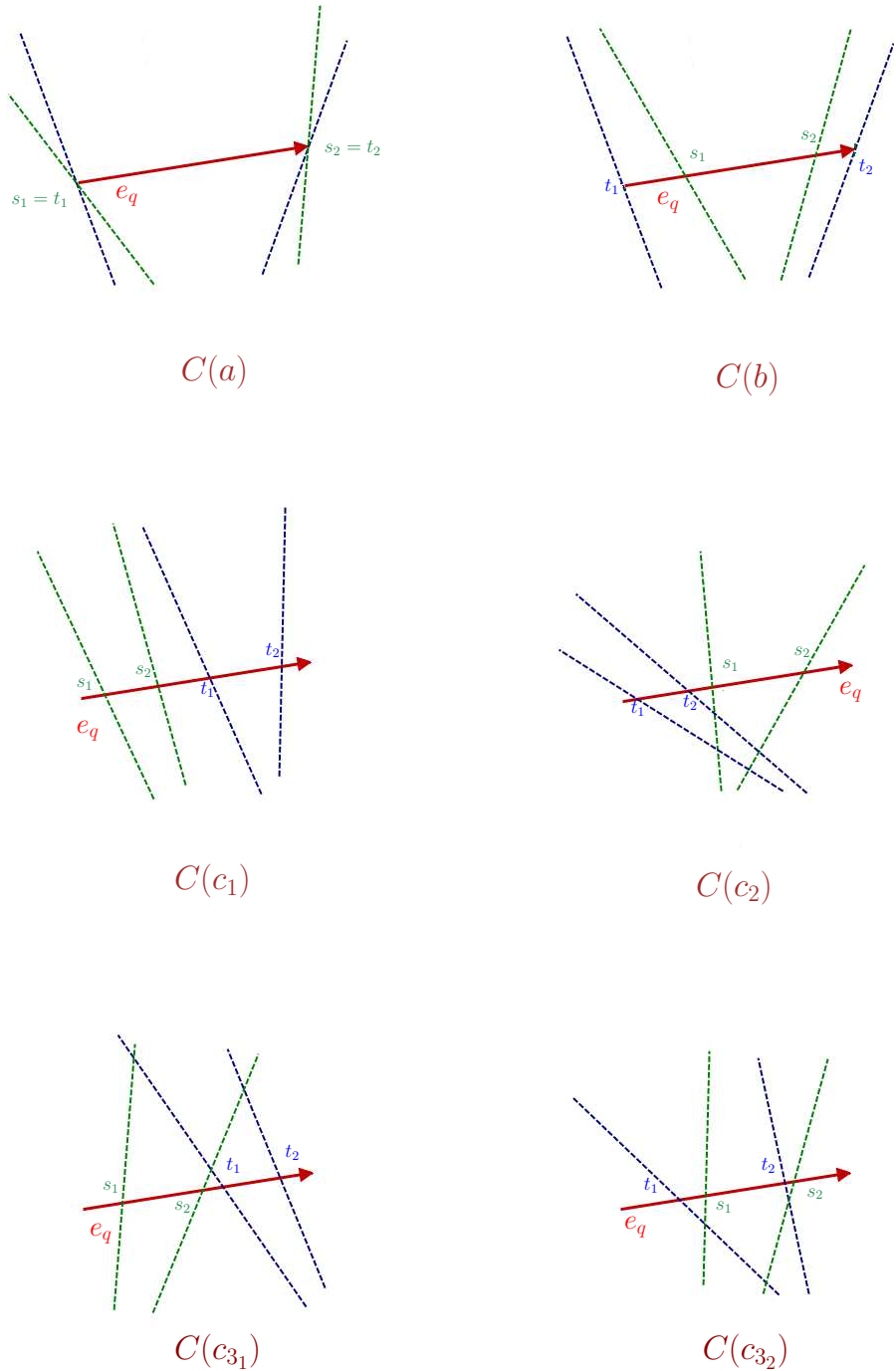}
	\caption{Cases $C(x)$ for two faces $F_s$ (green lines) and $F_t$ (blue lines), where each line may be a chain of line segments (windows or boundary of $P$). The status of $\overline{s_1s_2} \cap \overline{t_1t_2}$ as well as the status of the points in $I$ indicate each of the cases}
	\label{fig:Cellmin}
\end{figure}

\begin{enumerate}
\item [{$C(a)$}] $\alpha(e_q)$ and $\beta(e_q)$ are both in $F_s \cap F_t$.
                 In this case, $e_q$ entirely lies in one \emph{Cell} and does not intersect
                 any window(s) of $W$.
                 The portion of this \emph{Cell},
                 which lies in $p$ (like $Ce_p(1)$) is the only member of $Cellmin_p$
                 due to the fact that
                 walking from $Ce_p(1)$ to other $Ce_p(i)$ for $i>1$
                 increases $cellmin_p=||Ce_p(1)||$ by at least one.
                 To find $Ce_p(1)$, we locate $\alpha(e_q)$ and $\beta(e_q)$ in \emph{Cell} $C$.
                 Consider $w \in W$, where $\alpha(e_q)=\alpha(w)$ and
                 $\gamma(w,\alpha(w)) \in C$
                 (there is always such a window since on $\overline{q\alpha(e_q)}$ only $\alpha(e_q)$
                 is visible from $G^{-1}_s(F_s)$ and $G^{-1}_t(F_t)$).
                 Cut $C$ by $e_q$ and let $Ce_p(1)$ be the portion of $C$, where $\gamma(w,\alpha(w))$ is located.

\item [{$C(b)$}] $\alpha(e_q)$ and $\beta(e_q)$ are both either in face $F_s$ or $F_t$, but not both.
                 Therefore, $e_q$ entirely lies in one face and intersects
                 window(s) of $W_s$ or $W_t$, but not both.
                 Without loss of generality,
                 suppose that $e_q$ is in $F_s$ and intersects
                 window(s), $(w'(t)$ or $w''(t)) \in W_t$.
                 Let $w(s) = G^{-1}_s(F_s)$ and $w(t) \in W_t$ be the parent of $(w'(t)$ or $w''(t))$ in $WT(t)$.
                 Also, $F_t=G_t(w(t))$.
                 Obviously, $e_q$ is completely visible from $w(s)$ and a portion of $e_q$ is
                 visible from $w(t)$.
                 Thus, $F_s \cap F_t \cap p \neq \emptyset$. Like the argument in $C(a)$,
                 $Ce_p(1)=F_s \cap F_t \cap p$ is the only member of $Cellmin_p$ with $cellmin_p=||Ce_p(1)||$.
                 To find $Ce_p(1)$, we locate $(t_1$ and $t_2)$ in \emph{Cell} $C$ (degenerate cases included).
                 Cut $C$ by $\overline{t_1t_2}$
                 and let $Ce_p(1)$ be the portion of $C$, where $\gamma(w'(t),t_1)$ or $\gamma(w''(t),t_2)$ are located.

\item [{$C(c)$}] $\alpha(e_q)$ and $\beta(e_q)$ are both in
                 different faces of $SPM(s)$ as well as different faces of $SPM(t)$.
                 In this case, $e_q$ intersects at least one and at most two windows of $W_s$ and $W_t$.
                 Suppose that $e_q$ intersects
                 windows $(w'(s)$ or $w''(s)) \in W_s$ and $(w'(t)$ or $w''(t)) \in W_t$.
                 Indeed, $\overline{s_1s_2}$ and $\overline{t_1t_2}$ exist even for degenerate cases.
                 Let $w(s) \in W_s$ be the parent of $(w'(s)$ or $w''(s))$ in $WT(s)$ and
                 $w(t) \in W_t$ be the parent of $(w'(t)$ or $w''(t))$ in $WT(t)$.
                 Furthermore, $F_s=G_s(w(s))$ and $F_t=G_t(w(t))$.
                 Accordingly, the following cases may occur:
                 \begin{enumerate}
                 \item [{$C(c_1)$}]
                 Let $F_s \cap F_t=\emptyset$
                 (unlike $C(a)$ and $C(b)$).
                 Suppose that $w'(t)$ is closer to $F_s$ than $w''(t)$, and $F'_t=G_t(w'(t))$.
                 Since $e_q$ intersects $F_s$ and $w'(t)$, $F_s \cap F'_t \cap p \neq \emptyset$.
                 The same holds for $F_t$, $F'_s=G_s(w'(s))$ and $p$.
                 Let $Ce_p(1)=F_s \cap F'_t \cap p$ and $Ce_p(2)=F_t \cap F'_s \cap p$.
                 For any points $X_s \in F_s$, $X_t \in F_t$ and $x \in \{p-F_s-F_t\}$, we have
                 $\lvert \pi_L(s,X_s) \rvert < \lvert \pi_L(s,x) \rvert$ and
                 $\lvert \pi_L(t,X_t) \rvert < \lvert \pi_L(t,x) \rvert$.
                 Thus,
                 $\lvert \pi_L(s,X_s) \rvert + \lvert \pi_L(t,X_t) \rvert
                 < \lvert \pi_L(s,x) \rvert + \lvert \pi_L(t,x) \rvert - 1$.
                 Suppose that $x_s \in Ce_p(1) \subseteq F_s \cap p$ and $x_t \in Ce_p(2) \subseteq F_t \cap p$.
                 By the construction of $Ce_p(1)$ for any $x_s$,
                 there is a point $X_t \in F_t$ such that $x_s$ and $X_t$
                 are visible to each other. Since $F_s \cap F_t= \emptyset$,
                 $\lvert \pi_L(s,x_s) \rvert + \lvert \pi_L(t,x_s) \rvert=
                  \lvert \pi_L(s,x_s) \rvert + \lvert \pi_L(t,X_t) \rvert + 1
                  < \lvert \pi_L(s,x) \rvert + \lvert \pi_L(t,x) \rvert$.
                 The same holds, if $x_t \in Ce_p(2)$, i.e.,
                 $\lvert \pi_L(s,x_t) \rvert + \lvert \pi_L(t,x_t) \rvert
                  < \lvert \pi_L(s,x) \rvert + \lvert \pi_L(t,x) \rvert$.
                 Let $x \in \{(F_s \cap p)-Ce_p(1)\}$.
                 Since $F_s \cap p$ and $F_t$ are disjoint sets, for any point $X_t \in F_t$,
                 $\lvert \pi_L(x,X_t) \rvert \geq 2$.
                 Similarly, $\lvert \pi_L(x,X_s) \rvert \geq 2$ for $x \in \{(F_t \cap p)-Ce_p(2)\}$
                 and any point $X_s \in F_s$.
                 This indicates that $Ce_p(1)$ and $Ce_p(2)$ are the only members of $Cellmin_p$ with
                 $cellmin_p=||Ce_p(1)||=||Ce_p(2)||$.
                 In this case,
                 $\overline{s_1s_2} \cap \overline{t_1t_2}=\emptyset$ and $n(I)=0$.
                 Conversely, since $p$ and $P-p$ do not contain any holes, and $e_q$ crosses both $F_s$ and $F_t$,
                 it is easy to show that $F_s \cap F_t \cap p = \emptyset$ and
                 $F_s \cap F_t \cap (P-p) = \emptyset$.
                 Thus, $F_s \cap F_t = \emptyset$.
                 To find $Ce_p(1)$ and $Ce_p(2)$, we locate $(s_1$ and $s_2)$ in \emph{Cell} $C_1$
                 and $(t_1$ and $t_2)$ in \emph{Cell} $C_2$, respectively.
                 Cut $C_1$ by $\overline{s_1s_2}$ and cut $C_2$ by $\overline{t_1t_2}$.
                 Let $Ce_p(1)$ be the portion of $C_1$, where $\gamma(w'(s),s_1)$ or $\gamma(w''(s),s_2)$ are located.
                 Also, let $Ce_p(2)$ be the portion of $C_2$,
                 where $\gamma(w'(t),t_1)$ or $\gamma(w''(t),t_2)$ are located.
\eject
                 \item [{$C(c_2)$}]
                 Let $F_s \cap F_t \neq \emptyset$,
                 but $F_s \cap F_t \cap p=\emptyset$. In this case,
                 $F_s \cap F_t$ must be in $P-p$ between the intersections of
                 the windows $w'(s),w''(s)$ with $w'(t),w''(t)$.
                 Otherwise, since $e_q$ crosses these windows, $P-p$ contains a hole.
                 Thus, $F_s$ is divided by at least one window of $F_t$ (like $w'(t)$), where a portion of
                 $w'(t)$ is visible from $w(s)$.
                 This shows that $F_s \cap p$ would be completely visible from $w'(t)$.
                 The same holds for $F_t \cap p$ and $w'(s)$.
                 Therefore, like the case $C(c_1)$, $Ce_p(1)=F_s \cap p$ and $Ce_p(2)=F_t \cap p$
                 are the only members of $Cellmin_p$ with $cellmin_p=||Ce_p(1)||=||Ce_p(2)||$.
                 In this case,
                 $\overline{s_1s_2} \cap \overline{t_1t_2}=\emptyset$, $n(I)>0$ and $I \subset P-p$.
                 Also, the reverse situation holds.
                 The computation of $Ce_p(1)$ and $Ce_p(2)$ is similar to the case $C(c_1)$.

                 \item [{$C(c_3)$}]
                 Now, let $F_s \cap F_t \cap p \neq \emptyset$.
                 Similar to the case $C(b)$, $Ce_p(1)=F_s \cap F_t \cap p$
                 would be the only member of $Cellmin_p$ with $cellmin_p=||Ce_p(1)||$.
                 The following cases can be considered:

                 \begin{enumerate}
                 \item [{$C(c_{3_1})$}] $\overline{s_1s_2} \cap \overline{t_1t_2} = \emptyset$,
                                        $n(I)>0$ and $I \subset p$
                                        (the reverse situation holds similar to the case $C(c_2)$).
                                        If $I_1 \in I$, we locate $I_1$ in \emph{Cell} $C$ and
                                        let $Ce_p(1)=C$.

                 \item [{$C(c_{3_2})$}] Let $\overline{ST} = \overline{s_1s_2} \cap \overline{t_1t_2} \neq \emptyset$.
                                        Locate $(S$ and $T)$ in \emph{Cell} $C$.
                                        Cut $C$ by $\overline{ST}$
                                        and let $Ce_p(1)$ be the portion of $C$,
                                        where $\gamma(w,S)$ or $\gamma(w',T)$ are located.
                                        Note that
                                        $(w=w'(s)$ or $w''(s))$ and $(w'=w'(t)$ or $w''(t))$,
                                        where $w,w' \in C$
                                        (if $S=T$\footnote{If
                                        the two windows that cross at this point coincide,
                                        $Ce_p(1)$ will be the portion of this window
                                        that lies in $p$.}
                                        and ($\gamma(w,S)$ and $\gamma(w',T)$ belong to different
                                        \emph{Cells}), $Ce_p(1)=S=T$).
                 \end{enumerate}
                 \end{enumerate}
\end{enumerate}

The above cases\footnote{For the cases, where the computation of $I$ is required,
windows $w$ with ($\alpha(w)=\alpha(e_q)$ or $\beta(e_q)$) or ($\beta(w)=\beta(e_q)$) are not needed.
This is compatible with the definition of points intersecting $e_q$ (Section 4.2.1).}
for $C(x)$ indicate that $1 \leq n(Cellmin_p) \leq 2$ with the value $cellmin_p$
for all $Cellmin_p$ members.
It is easy to see that the most expensive part of these computations takes $O(\log n)$ time
(others require constant time).
The following corollary summarizes the above argument:

\begin{corollary}\label{Cellmin_Computation}
$Cellmin_p$ and $cellmin_p$ for the given $e_q$ are computed in $O(\log n)$ time,
after $O(n)$ time preprocessing of $SPM(s)$, $SPM(t)$ and Cells.
\end{corollary}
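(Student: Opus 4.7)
The plan is to separate the two assertions (preprocessing bound and query bound) and then verify, case by case, that every subtask of the case analysis $C(a)$--$C(c_3)$ fits into a constant number of $O(\log n)$ operations.

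For the preprocessing bound, I would assemble the following ingredients, each of which costs $O(n)$ and is either established earlier or is a standard black box. By Lemma~\ref{Linear} the two maps $SPM(s)$ and $SPM(t)$ are constructed in $O(n)$ time, and by Lemma~\ref{Cells_Construction} the overlay \emph{Cells} (together with the window--window intersection points that witness each crossing, and with the values $\lVert Ce(i)\rVert$) is also built in $O(n)$ time. On top of each of these three planar subdivisions I would run the point-location preprocessing of \cite{Edelsbrunner_1986}; since each subdivision has $O(n)$ complexity, this adds only $O(n)$ per structure. I would also store, on every window $w\in W$, a pointer from $w$ to its parent window in $WT(s)$ or $WT(t)$ and, from every intersection point between a window of $W_s$ and a window of $W_t$, pointers to the two \emph{Cells} on either side along each window; all of this is produced as a by-product of the overlay and takes no extra asymptotic time. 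The total preprocessing is therefore $O(n)$.

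For the query bound, given $e_q$ (equivalently $\alpha(e_q)$ and $\beta(e_q)$), I would perform the following constant-length pipeline, each step of which is $O(1)$ or $O(\log n)$. First, locate $\alpha(e_q)$ and $\beta(e_q)$ inside $SPM(s)$ and $SPM(t)$; this is four point-location queries, costing $O(\log n)$, and it already determines which of $C(a)$, $C(b)$, $C(c)$ we are in and which windows $w'(s),w''(s),w'(t),w''(t)$ of $W$ are crossed by $e_q$. Second, compute the at most four intersection points $s_1,s_2,t_1,t_2$ in constant time from the explicit window equations. Third, read off the set $I$ of at most four window--window crossings between $\{w'(s),w''(s)\}$ and $\{w'(t),w''(t)\}$, which were stored in the preprocessing, and classify each point of $I$ as lying in $p$ or $P-p$ by the constant-time ordering test described after the definition of $I$ in the paper. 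Fourth, branch into the appropriate subcase of $C(a)$--$C(c_3)$: in every subcase we must locate $O(1)$ designated points (either $\alpha(e_q),\beta(e_q)$, or one of $t_1,t_2,s_1,s_2,I_1,S,T$, or some $\gamma(w,z)$) inside the \emph{Cells} subdivision, which is $O(1)$ calls to the $O(\log n)$ point-location structure from the preprocessing, and then possibly slice a single \emph{Cell} $C$ by $e_q$ (or by $\overline{s_1s_2}$, $\overline{t_1t_2}$, $\overline{ST}$) and decide which side contains the relevant $\gamma(\cdot)$ point --- a constant-time combinatorial comparison on the at most $O(1)$ windows incident to $C$ that appear in our constant-size input. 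Reading off $cellmin_p$ from the recovered $Ce_p(1)$ (and $Ce_p(2)$ when it exists) is then a single lookup of the precomputed $\lVert Ce(\cdot)\rVert$ values.

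The main obstacle I would expect is not complexity-theoretic but bookkeeping: checking that the case dispatch genuinely uses only $O(1)$ point-location calls and $O(1)$ primitive geometric predicates, \emph{including} the degenerate cases flagged in Section~4.2.1 (the degenerate case in the ``$\beta(w)=\beta(e_q)$'' footnote; the replacement of a missing $s_1$ by $\alpha(e_q)$ or $\beta(e_q)$; the coincident-windows footnote under $C(c_{3_2})$ when $S=T$). These do not add new operations, only new branches, and each branch reduces to one of the $O(\log n)$ primitives already in our toolkit. Combining the branches, at most a constant number of point-location queries and a constant amount of additional work are executed, so the total query time is $O(\log n)$, as claimed.
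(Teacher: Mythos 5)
Your proposal is correct and follows essentially the same route as the paper: the corollary is stated as a summary of the case analysis $C(a)$--$C(c_{3_2})$, whose cost the paper bounds exactly as you do, namely $O(n)$ preprocessing from Lemmas~\ref{Linear} and~\ref{Cells_Construction} plus the point-location structures of \cite{Edelsbrunner_1986} on $SPM(s)$, $SPM(t)$ and the \emph{Cells} overlay, and a query consisting of a constant number of $O(\log n)$ point locations together with constant-time window-intersection and side-of-$e_q$ tests (including the degenerate branches). Your explicit accounting of the stored window--window intersection points and the $\gamma(\cdot)$ side tests matches the paper's implicit argument, so no gap remains.
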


\subsection{Algorithm overview}
For any points $s$, $t$ and $q$ inside $P$, the main algorithm flow is described as follows:

\begin{enumerate}
\item [{$1)$}] Determine which cases of $Q(a)$, $Q(b)$ or $Q(c)$ occur.

\begin{enumerate}
\item [{$1.1)$}] In case $Q(a)$ or $Q(b)$ just compute $\pi_L(s,t)$ and return.

\item [{$1.2)$}] In case $Q(c)$ compute $e_q$ and continue.
\end{enumerate}

\item [{$2)$}] For single query compute $Cellmin_p$ and select an arbitrary point $x$ from a member of $Cellmin_p$.

\item [{$3)$}] For triple query select an arbitrary point $x$ from a portion of $e_q$ (determined in Section 6) or let $x \in I$ (remember the definition of $I$ from Section 4.2.1).

\item [{$4)$}] Report $\pi_L(s,x)$ appended by $\pi_L(x,t)$.
\end{enumerate}

The following sections illustrate how to compute $MLP(s,t,q)$ in detail.

\section{Single query}
We first describe the \emph{preprocessing} phase of the algorithm for two given points $s$, $t$ and a query point $q$ inside $P$:

\begin{enumerate}
\item [{$P_1)$}] Build a data structure for answering shortest Euclidean
                 path queries between two arbitrary points
                 inside $P$.

\item [{$P_2)$}] Construct $SPM(s)$, $SPM(t)$
                 and compute the value of each face of these maps. Also, construct the window trees
                 $WT(s)$ and $WT(t)$ ($G_s$, $G^{-1}_s$ and $G_t$, $G^{-1}_t$).

\item [{$P_3)$}] Construct \emph{Cells} and compute their values.

\item [{$P_4)$}] Prepare $SPM(s)$, $SPM(t)$ and \emph{Cells} for \emph{point location} queries.

\item [{$P_5)$}] Build a data structure for \emph{ray shooting} queries inside $P$.
\end{enumerate}

The \emph{query processing} algorithm for computing $\lvert MLP(s,t,q) \rvert$ proceeds as follows
(note that $Flag_s$ and $Flag_t$ indicate the number of windows of each face intersecting $e_q$):
\begin{enumerate}
\item [{$Q_1)$}] Compute implicit representations of $\pi_E(s,q)$ and $\pi_E(t,q)$, and extract from them the last
                 vertices $v_1$ and $v_2$, respectively. This can be done by the data structure of Step $P_1$.

\item [{$Q_2)$}] If either $v_1$ or $v_2$ does not exist, report $\lvert \pi_L(s,t) \rvert$ (case $Q(a)$),
                 as in \cite {Suri_1990} using Steps $P_1$, $P_2$ and $P_4$.

\item [{$Q_3)$}] If $v_1 \neq v_2$, again report $\lvert \pi_L(s,t) \rvert$ (case $Q(b)$), otherwise,
                 compute $e_q$ using the data structure of Step $P_5$ (case $Q(c)$) and continue the following steps:

\item [{$Q_4)$}] Locate $\alpha(e_q)$ and $\beta(e_q)$ in the faces of
                 $SPM(s)$ to compute two faces $F_{s}(j)$ and $F_{s}(k)$
                 containing them, respectively.
                 This can be done by Steps $P_2$ and $P_4$ ($1 \leq j,k \leq ||SPM(s)||$).

\item [{$Q_5)$}] Let $w_j=w_k=Null$.

                 If $j=k$, $\{Flag_s=0\}$,
                 otherwise,

                 $\{$
                 if $||F_{s}(j)|| > ||F_{s}(k)||$,
                 $\{Flag_s=1$; $w_j=G^{-1}_s(F_{s}(j))$; $s_1=w_j \cap e_q$;
                 $s_2=(\alpha(e_q)$ or $\beta(e_q))$ depending on which one belongs to $F_{s}(k)\}$,
                 otherwise,
                 if $||F_{s}(j)|| < ||F_{s}(k)||$, $\{Flag_s=1$; $w_k=G^{-1}_s(F_{s}(k))$;
                 $s_1=(\alpha(e_q)$ or $\beta(e_q))$ depending on which one belongs to $F_{s}(j)$; $s_2=w_k \cap e_q\}$,
                 otherwise, $\{Flag_s=2$; $w_j=G^{-1}_s(F_{s}(j))$; $w_k=G^{-1}_s(F_{s}(k))$;
                 $s_1=w_j \cap e_q$; $s_2=w_k \cap e_q\}$
                 $\}$.

                 Let $w'(s)=w_j$, $w''(s)=w_k$.

\item [{$Q_6)$}] Repeat Steps $Q_4$ and $Q_5$ for $SPM(t)$ to compute $\{Flag_t$, $w'(t)$, $w''(t)$, $t_1$, $t_2\}$.

\item [{$Q_7)$}] If $(Flag_s>0$ and $Flag_t>0)$, compute the set $I$ and $\overline{s_1s_2} \cap \overline{t_1t_2}$.

\item [{$Q_8)$}] If $Flag_s=Flag_t=0$, do as $C(a)$, otherwise,
                 if $Flag_s*Flag_t=0$, do as $C(b)$,
                 otherwise,
                 if $\overline{s_1s_2} \cap \overline{t_1t_2} \neq \emptyset$, do as $C(c_{3_2})$,
                 otherwise, if $n(I)=0$, do as $C(c_1)$, otherwise,
                 if $I \subset p$ (checkable in constant time),
                 do as $C(c_{3_1})$, otherwise, do as $C(c_2)$.
                 This can be done by Steps $P_2$, $P_3$ and $P_4$.
                 For all these conditions report $cellmin_p$.
\end{enumerate}

To analyse the time complexity of this algorithm,
observe that all of the above preprocessing requires altogether $O(n)$ time.
This follows from \cite {Guibas_1989}, \cite {Suri_1990}, Lemma~\ref{Cells_Construction}, \cite {Edelsbrunner_1986} and
\cite {Guibas_1987} for Steps $P_1$, $P_2$, $P_3$, $P_4$ and $P_5$, respectively.
On the other hand, based on Corollary~\ref{Eq_Computation} (with $O(n)$ preprocessing), Corollary~\ref{Cellmin_Computation}
and \cite {Suri_1990}, the query processing phase can be done in $O(\log n)$ time.

For computing $MLP(s,t,q)$, we must report $\pi_L(s,t)$ instead of $\lvert \pi_L(s,t) \rvert$
in Steps $Q_2$ and $Q_3$.
Also, Step $Q_8$ must be modified to report $\pi_L(s,x)$ appended by $\pi_L(x,t)$,
where $x$ is a point in a member of $Cellmin_p$
(in cases $C(c_1)$ and $C(c_2)$, $n(Cellmin_p)=2$).
Thus, the following theorem is proved:

\begin{theorem}
Given a simple polygon $P$ with $n$ vertices and two points $s$, $t$ inside it,
we can preprocess $P$ in time $O(n)$ so that for a query point $q$, one can find $\lvert MLP(s,t,q) \rvert$
in $O(\log n)$ time. Further, $MLP(s,t,q)$ can be reported in an additional time $O(\lvert MLP(s,t,q) \rvert)$.
\end{theorem}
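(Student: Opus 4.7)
The plan is to assemble the theorem from the preceding analysis by addressing three sub-claims: (1) the preprocessing Steps $P_1$--$P_5$ run in $O(n)$ total time; (2) the query algorithm $Q_1$--$Q_8$ correctly returns $|MLP(s,t,q)|$ in $O(\log n)$ time; and (3) a straightforward modification outputs the path itself in an additional $O(|MLP(s,t,q)|)$ time.

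For (1), I would bound each preprocessing step by the corresponding cited result: $P_1$ by the Euclidean shortest-path data structure of \cite{Guibas_1989}; $P_2$ by Lemma~\ref{Linear} together with Suri's window-tree construction; $P_3$ by Lemma~\ref{Cells_Construction}; $P_4$ by the linear-time planar point-location preprocessing of \cite{Edelsbrunner_1986} applied to the three $O(n)$-complexity subdivisions $SPM(s)$, $SPM(t)$, and \emph{Cells}; and $P_5$ by the ray-shooting structure of \cite{Guibas_1987}. Each contributes $O(n)$, so the total is $O(n)$.

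For (2), I would walk through the query phase. Step $Q_1$ uses $P_1$ to extract $v_1,v_2$ in $O(\log n)$. Steps $Q_2$--$Q_3$ dispatch on the classification of Section 4: in cases $Q(a)$ and $Q(b)$ the definitions give $MLP(s,t,q)=\pi_L(s,t)$, whose length is the $SPM(s)$-value of the face containing $t$ and is recovered in $O(\log n)$ via $P_2$ and $P_4$; in $Q(c)$, Corollary~\ref{Eq_Computation} yields $e_q$ in $O(\log n)$. Steps $Q_4$--$Q_7$ perform a constant number of point-location queries and constant-time geometric tests to produce the flags, the candidate endpoints $s_1,s_2,t_1,t_2$, the set $I$, and the intersection $\overline{s_1s_2}\cap\overline{t_1t_2}$. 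Step $Q_8$ then dispatches to one of the six configurations $C(a)$, $C(b)$, $C(c_1)$, $C(c_2)$, $C(c_{3_1})$, $C(c_{3_2})$ and, by Corollary~\ref{Cellmin_Computation}, returns $cellmin_p$ in $O(\log n)$; the case analysis preceding that corollary certifies $cellmin_p=|MLP(s,t,q)|$. For (3), I would replace the numerical outputs by path traces: in cases $Q(a)$, $Q(b)$ I output $\pi_L(s,t)$ by climbing $WT(s)$ from the face containing $t$; in case $Q(c)$ I pick any representative point $x$ in the chosen member of $Cellmin_p$ and concatenate $\pi_L(s,x)$ with $\pi_L(x,t)$, each obtained by climbing $WT(s)$, respectively $WT(t)$, from the face containing $x$. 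Each trace emits one link per window-tree node traversed, so the extra cost is $O(m)$ as required.

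The main obstacle is the correctness of Step $Q_8$'s case dispatch: one must verify that the constant-time predicates on $\overline{s_1s_2}\cap\overline{t_1t_2}$ and on the position of the points of $I$ relative to $p$ faithfully distinguish the six geometric configurations, and that in each case the cell-portion selected by the prescribed $\gamma$-rule realises the minimum value $cellmin_p$. This is exactly the content of the case analysis preceding Corollary~\ref{Cellmin_Computation}, which in turn relies on Lemma~3.1 of \cite{Zarrabi_2019} for the existence of a valid bending point on $MLP(s,t,q)$ inside $p$; for the present theorem I would invoke that analysis rather than repeat it.
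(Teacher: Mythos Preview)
Your proposal is correct and follows essentially the same approach as the paper: you cite the same references (\cite{Guibas_1989}, \cite{Suri_1990}, Lemma~\ref{Cells_Construction}, \cite{Edelsbrunner_1986}, \cite{Guibas_1987}) to bound the preprocessing Steps $P_1$--$P_5$ by $O(n)$, invoke Corollaries~\ref{Eq_Computation} and~\ref{Cellmin_Computation} for the $O(\log n)$ query, and modify Steps $Q_2$, $Q_3$, $Q_8$ to output $\pi_L(s,t)$ or $\pi_L(s,x)\cdot\pi_L(x,t)$ for the path-reporting variant. Your treatment is in fact somewhat more explicit than the paper's (e.g., spelling out the window-tree traversal for path tracing and the correctness justification via the case analysis of Section~4.2.1 and Lemma~3.1 of \cite{Zarrabi_2019}), but the argument is the same.
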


\section{Triple query}
In this section, we propose an algorithm for three query points.
Our method is closely related to the work of Arkin et al. \cite {Arkin_1995}
(note that the construction of $V_G(P)$ is essential in \cite {Arkin_1995}).
So, we borrow the related
terminology from \cite {Aggarwal_1989, Arkin_1995} and review some terms adapted
to the notation used in this paper.

Consider a window $w$ of $SPM(x)$, where $x \in P$ is a point or line segment.
The \emph{combinatorial type} of $w$ is the vertex-edge pair $(v,e)$,
where $v=\alpha(w)$ is a reflex vertex of $P$ and $\beta(w)$
lies on an edge $e$ of $P$.
The combinatorial type of $SPM(x)$ is a listing of the combinatorial types of all of its windows.
Constructing $SPM(x)$ for all vertices of $P$ and all extension points of $V_G(P)$ edges
creates a list of windows.
The endpoints of these windows on the boundary of $P$ together with the vertices of $P$ partition the boundary of $P$ into
$O(n^2)$ intervals, called \emph{atomic segments}.
We can sort the endpoints of all windows along the boundary edges of $P$ in $O(n^2 \log n)$ time
and get access to an ordered list of atomic segments on each edge of $P$.
The following lemma from \cite {Arkin_1995} describes the special characteristic of atomic segments:

\begin{lemma}\label{Atomic}
If $L$ is an atomic segment on the boundary of $P$, the combinatorial type of $SPM(x)$ is the same for all points
$x$ in the interior of $L$.
\end{lemma}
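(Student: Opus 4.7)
The plan is to characterize exactly which events cause the combinatorial type of $SPM(x)$ to change as $x$ varies continuously along the boundary of $P$, and then show these events occur only at the endpoints that partition the boundary into atomic segments.

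First, I would fix a window $w$ of $SPM(x)$ with combinatorial type $(v,e)$ and track how it varies as $x$ moves along an edge of $P$. The reflex vertex $v = \alpha(w)$ is determined by the combinatorial structure of the visibility/window tree, and $\beta(w)$ is the boundary hit of the ray from $x$ through $v$. By continuity of ray shooting, $\beta(w)$ slides continuously along $e$ as $x$ moves, and the pair $(v,e)$ remains the combinatorial type of $w$ unless $\beta(w)$ crosses a vertex of $P$. The latter happens precisely when $x$, $v$, and some other vertex $v'$ of $P$ become collinear, i.e., when $x$ lies on the line through the visibility-graph edge $\overline{vv'}$. Such positions of $x$ are exactly the extension points associated with edges of $V_G(P)$.

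Second, I would analyze the events at which a window itself appears in or disappears from $SPM(x)$, i.e., events that modify the structure of $WT(x)$. These correspond to a face of $SPM(x)$ degenerating to zero area, which again requires a collinearity of $x$ with two vertices of $P$: a new window is spawned off a reflex vertex $v'$ exactly when the ray from $x$ through some already-covered reflex vertex $v$ grazes $v'$. Hence these topological events also occur only at extension points of $V_G(P)$.

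Third, I would tie the argument to the defining set of boundary partitioners. The endpoints used to cut the boundary into atomic segments are the vertices of $P$ together with the $\beta$-endpoints of windows of $SPM(u)$ for every vertex $u$ of $P$ and every extension point $u$ of an edge of $V_G(P)$. Each such $\beta$-endpoint is exactly the boundary hit of a line through two vertices of $P$ (or through a vertex and an extension point, which is itself defined by two vertices), so the two classes of critical events identified above are precisely the atomic-segment endpoints. Consequently, as $x$ moves in the interior of an atomic segment $L$, no collinearity event occurs: every window $w$ of $SPM(x)$ keeps the same pair $(v,e)$, no window is created or destroyed, and therefore the combinatorial type of $SPM(x)$ is invariant on $L$.

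The main obstacle will be cleanly separating the two kinds of combinatorial events (a window changing its incident edge versus a window appearing or disappearing in $WT(x)$) and verifying that \emph{both} are captured by the extension-point construction; in particular, one must check that an appearance/disappearance event is always mediated by a reflex vertex that is itself a window source in some $SPM(u)$ under consideration. Once this case analysis is in place the conclusion follows by continuity, and I would note that the argument is the one given for Lemma~3 in \cite{Arkin_1995}, which I would cite rather than reproduce in full detail.
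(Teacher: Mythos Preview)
The paper does not actually prove this lemma: it is stated as a result imported from \cite{Arkin_1995} with no accompanying proof environment. Your proposal is therefore more detailed than the paper's own treatment, which simply cites the lemma; since you also conclude by deferring to \cite{Arkin_1995}, your approach is entirely consistent with the paper's, and the sketch you give (tracking collinearity events of $x$ with pairs of vertices as the only moments where a window's combinatorial type changes or a window is created/destroyed) is the standard argument one finds in that reference.
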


Given a polygonal path $\Pi$ inside $P$, an interior edge $e \in \Pi$ is called a \emph{pinned} edge if it passes through
two vertices of $P$ on opposite sides of $e$ such that $e$ is tangent to $P$ at these vertices.
The \emph{greedy} minimum link path (as defined in \cite {Arkin_1995}) from a point $x$ to a point $y$ inside $P$
(called $\pi_{LG}(x,y)$ here) uses only
the extensions of the windows of $SPM(x)$
and the last link is chosen to pass through the last vertex of $\pi_E(x,y)$.
It is easy to verify whether $\pi_{LG}(x,y)$ has a pinned edge or not.
This can be done in $O(n)$ time by traversing the path $\pi_{LG}(x,y)$ for two arbitrary points $x$ and $y$ inside $P$.

On the other hand, we can check if there is a pinned edge between the atomic segment $L$ (any point $x \in L$)
and $\alpha(w_i)$ for all windows $w_i$ of $SPM(L)$ during the construction of $SPM(L)$ in $O(n)$ time.
Let $w$ be a window of $SPM(L)$ such that $\pi_{LG}(x,\alpha(w))$ has no pinned edge.
Also, let $\beta(w)$ lie on an edge $e$ of $P$.
According to Lemma~\ref{Atomic}, the combinatorial type of $w$ is the pair $(\alpha(w),e)$ for all $x \in L$.
Indeed, as $x$ varies along $L$, $\beta(w)$ varies along $e$ according to a \emph{projection function}
$f(x)$, which can be written as a fractional linear form \cite {Aggarwal_1989}:
$\beta(w)=f(x)=$
\resizebox{0.08\hsize}{!}{$\frac{Ax+B}{Cx+D}$}

The four constants $A, B, C$ and $D$ depend on the atomic segment $L$, the fixed point $\alpha(w)$
(reflex vertex of $P$) and the edge $e$.
In the case that $\pi_{LG}(x,\alpha(w))$ has a pinned edge, it is not required to compute the projection function for $L$ and $w$ as the position of $\beta(w)$ on the edge $e$ would not change when $x$ varies along $L$.
Thus, for each window $w_i$ of $SPM(L)$, where there is no pinned edge on
$\pi_{LG}(x,\alpha(w_i))$ for any $x \in L$, we compute and store the projection function $f_i$.
This preprocessing can be done in $O(n)$ time for each atomic segment $L$
during the construction of $SPM(L)$ \cite {Aggarwal_1989, Arkin_1995}.
So, for a window $w_i$, we can evaluate the exact position of $\beta(w_i)$ as $x$ varies along $L$ in constant time.
Let $SPM_A(L)$ be the data structure $SPM(L)$ plus the following:
for each window $w_i$ of $SPM(L)$, if there is a pinned edge between $L$ and $\alpha(w_i)$,
set $flag(w_i)=1$, otherwise, set $flag(w_i)=0$, then compute and store the projection function $f_i$ for $w_i$.
The following corollary is concluded from the above argument and Lemma~\ref{Linear}:

\begin{corollary}\label{Additional_Info}
For an atomic segment $L$, $SPM_A(L)$ can be constructed in $O(n)$ time.
\end{corollary}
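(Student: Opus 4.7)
The plan is to piggy-back the additional bookkeeping on top of the linear-time construction of $SPM(L)$ given by Lemma~\ref{Linear}, so that the overall running time stays $O(n)$. Concretely, I would argue that $SPM_A(L)$ consists of $SPM(L)$ together with two pieces of constant-size information ($flag(w_i)$ and, when applicable, the four coefficients $A,B,C,D$ of the projection function $f_i$) attached to each window $w_i$, and that each of these pieces can be computed in amortized $O(1)$ time per window during the sweep that builds $SPM(L)$.

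First, invoke Lemma~\ref{Linear} to build $SPM(L)$ in $O(n)$ time; since $SPM(L)$ has $O(n)$ windows, it suffices to show how to augment each window $w_i$ in constant amortized time. For the pinned-edge flag, observe that the algorithm of \cite{Suri_1990} produces the windows in a BFS-like order along the window tree $WT(L)$, so $\alpha(w_i)$ is a reflex vertex and the link path $\pi_{LG}(x,\alpha(w_i))$ is implicit in the chain of parent windows leading from the root to $w_i$. Since testing whether a newly added link creates a tangency with a vertex of $P$ on the opposite side requires only local information (the previous link together with the reflex vertex at $\alpha(w_i)$), we can inherit the pinned-edge status from the parent window and update it in $O(1)$ when $w_i$ is spawned, as in the discussion preceding the corollary. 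This keeps the total work for all flags at $O(n)$.

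Second, for every window $w_i$ with $flag(w_i)=0$, compute and store the projection function $f_i(x)=(Ax+B)/(Cx+D)$. By Lemma~\ref{Atomic} the combinatorial type $(\alpha(w_i),e_i)$ is invariant as $x$ ranges over $L$, so the four coefficients are determined by the fixed reflex vertex $\alpha(w_i)$, the fixed edge $e_i$ of $P$ and the supporting line of $L$, which we have at hand the moment $w_i$ is produced; hence $A,B,C,D$ are computable in constant time by the closed-form fractional linear expression from \cite{Aggarwal_1989}. Summing over the $O(n)$ windows gives $O(n)$ additional work.

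The only subtle point — and the place I would spend the most care — is justifying that the pinned-edge test really is $O(1)$ amortized per window rather than $O(n)$ per window (which would blow up to $O(n^2)$). The key is that a pinned edge on $\pi_{LG}(x,\alpha(w_i))$ can only be created at the link corresponding to $w_i$ itself or inherited from $\pi_{LG}(x,\alpha(\mathrm{parent}(w_i)))$; a straight walk up $WT(L)$ is not needed because the flag of the parent already summarises the history. With that observation the three phases combine to $O(n)+O(n)+O(n)=O(n)$, proving the corollary.
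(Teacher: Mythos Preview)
Your proposal is correct and follows essentially the same approach as the paper, which simply deduces the corollary from Lemma~\ref{Linear} together with the preceding discussion and the references \cite{Aggarwal_1989,Arkin_1995}. One small caution: the projection function $f_i$ for a window at depth greater than one is really the \emph{composition} of the single projections through all reflex vertices along the greedy path, so its coefficients are obtained in $O(1)$ by composing the parent window's $f$ with one more M\"obius map rather than directly from $(\alpha(w_i),e_i,L)$ alone---this is exactly what \cite{Aggarwal_1989} provides, and the paper's own wording is equally loose on this point.
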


Now, we describe the proposed algorithm for three query points $s$, $t$ and $q$ inside $P$.
Unlike the algorithm developed for single query,
we only attempt to find $\overline{s_1s_2} \cap \overline{t_1t_2}$ or $I$.
Indeed, for triple query, we may need to update all the windows of a \emph{Cell}.
In the worst case, the number of these windows is $O(n)$, and hence the queries cannot be answered in $O(\log n)$ time.
On the other hand, for any $w \in W$ as $\beta(w)$ varies along an edge of $P$,
we may need to update the value of \emph{Cells} in $O(n)$ time.
For these reasons, $Cellmin_p$ and $cellmin_p$ cannot be used for triple query and
only part of the locus (lying on $e_q$ or a point in $I$) is found in $O(\log n)$ time.
We perform the following \emph{preprocessing} step on $P$:

\begin{enumerate}
\item [{$P_0)$}]
Build a data structure for answering minimum link
path queries between two arbitrary points
inside $P$ (this includes Steps $P_1$, $P_2$ and $P_4$ of the single query algorithm, and the construction of $V_G(P)$ mentioned in Lemma~\ref{Rayshooting}).
Also, with this data structure an ordered list of atomic segments on each edge of $P$ is computed.
We can modify this step for each atomic segment $L$ as follows:
construct $SPM_A(L)$ as well as the value of each face and $WT(L)$.
Also, prepare $SPM_A(L)$ for \emph{point location} queries
(note that windows of $SPM_A(L)$ are in fixed positions, but the position of windows of $SPM_A(x)$ may change
for an arbitrary point $x \in L$).
\end{enumerate}

According to \cite {Arkin_1995, Edelsbrunner_1986} and Corollary~\ref{Additional_Info},
since we have $O(n^2)$ atomic segments, the total time complexity of this step is $O(n^3)$.

For an atomic segment $L$, let $\delta(e_q,L)$ be the set of windows $w_i$ of $SPM_A(L)$,
where $\beta(e_q)$ belongs to an edge $e$ of $P$
such that the combinatorial type of $w_i$ is $(\alpha(w_i),e)$, $flag(w_i) \neq 1$ and $e_q \cap w_i=\emptyset$.
Based on Lemma~\ref{Intersection}, each edge of $P$ intersects at most two windows of $SPM_A(L)$.
Thus, $n(\delta(e_q,L)) \leq 2$, and by Step $P_0$, it can be computed in constant time.
Indeed, the purpose of introducing $\delta(e_q,L)$ is to store all the possible windows in $SPM_A(L)$ which
may intersect $e_q$ as a point $x$ varies along $L$ except those windows that are currently intersecting $e_q$.
Since the data structure for point location queries in the following algorithm is only preprocessed for $SPM_A(L)$,
the new intersecting windows with $e_q$ can be updated from $\delta(e_q,L)$ after we know $x$ and locate the proximity of the intersecting windows.
In the case that $\beta(w_i)$ coincides with $\beta(e_q)$ after it varied along $e$,
the intersection point can be ignored (based on the definition of points intersecting $e_q$ in Section 4.2.1).
Note that $\beta(w_i)$ never coincides with $\alpha(e_q)$. Thus,
the contribution of $\alpha(e_q)$ is omitted from $\delta(e_q,L)$.

\medskip
The \emph{query processing} algorithm for computing $\lvert MLP(s,t,q) \rvert$ can be outlined as follows:
\begin{enumerate}
\item [{$Q_1)$}] Compute implicit representations of $\pi_E(s,q)$ and $\pi_E(t,q)$, and extract from them the
                 first vertices $u_1$, $u_2$ and last
                 vertices $v_1$, $v_2$, respectively.
                 This can be done by the data structure used for
                 shortest Euclidean path queries of Step $P_0$.

\item [{$Q_2)$}] If either $v_1$ or $v_2$ does not exist, report $\lvert \pi_{LG}(s,t) \rvert$ (case $Q(a)$)
                 using the data structure of Step $P_0$ for link distance queries.

\item [{$Q_3)$}] If $v_1 \neq v_2$, again report $\lvert \pi_{LG}(s,t) \rvert$ (case $Q(b)$), otherwise,
                 compute $e_q$ using $V_G(P)$ (see Corollary~\ref{Eq_Computation} with $O(E)$ preprocessing)
                 of Step $P_0$ for \emph{ray shooting} queries (case $Q(c)$) and continue the following steps:

\item [{$Q_4)$}] Compute the intersection (called $x_s$) of the extension of $\vv{u_1s}$ with an edge $e$ of $P$.
                 This can be done by $V_G(P)$ similar to Step $Q_3$.
                 Find the atomic segment $L_s$ by using binary search on $e$, where $x_s \in L_s$.
                 Similarly, $L_t$ and $x_t \in L_t$ can be computed for the extension of $\vv{u_2t}$.

\item [{$Q_5)$}] Locate $\alpha(e_q)$ and $\beta(e_q)$ in the faces of
                 $SPM_A(L_s)$ to compute two faces $F_{s}(j)$ and $F_{s}(k)$
                 containing them, respectively. This can be done by the data structure
                 used for \emph{point location} queries of Step $P_0$ ($1 \leq j,k \leq ||SPM(s)||$).

\item [{$Q_6)$}] If $j=k$, $\{w_j=w_k=Null\}$,
                 otherwise,
                 $\{$
                 if $||F_{s}(j)|| > ||F_{s}(k)||$, $\{w_j=G^{-1}_s(F_{s}(j))$; $w_k=Null\}$,
                 otherwise, if $||F_{s}(j)|| < ||F_{s}(k)||$, $\{w_j=Null$; $w_k=G^{-1}_s(F_{s}(k))\}$,
                 otherwise, $\{w_j=G^{-1}_s(F_{s}(j))$; $w_k=G^{-1}_s(F_{s}(k))\}$
                 $\}$.

\item [{$Q_7)$}] If $(w_j \neq Null$ and $flag(w_j) \neq 1)$, $\{$update $w_j$ according to $f_j$;
                 if $(w_j \cap e_q = \emptyset)$, $w_j=Null\}$.

                 If $(w_k \neq Null$ and $flag(w_k) \neq 1)$, $\{$update $w_k$ according to $f_k$;
                 if $(w_k \cap e_q = \emptyset)$, $w_k=Null\}$.

                 ($f_j$ and $f_k$ are the projection functions corresponding to
                 $w_j$ and $w_k$ as $x_s$ varies along $L_s$).

\item [{$Q_8)$}] Compute $\delta(e_q,L_s)$ and let $\Delta_s=\delta(e_q,L_s)$.

                 If $n(\Delta_s) > 0$, $\{$update the windows of $\Delta_s$, according to their
                 corresponding projection functions and position of $x_s$ on $L_s$;
                 let $\Delta_s$ be the set of these updated windows intersecting $e_q\}$.

\item [{$Q_9)$}] If $w_j \neq Null$, $\Delta_s=\Delta_s \cup w_j$.

                 If $w_k \neq Null$, $\Delta_s=\Delta_s \cup w_k$.

\item [{$Q_{10})$}] Repeat Steps $Q_5$, $Q_6$, $Q_7$, $Q_8$ and $Q_9$ for $SPM_A(L_t)$
                    to compute $\Delta_t$.

\item [{$Q_{11})$}] If $(n(\Delta_s)>0$ and $n(\Delta_t)>0)$,
                    compute the set $I$ for intersections of windows in $\Delta_s$ and $\Delta_t$.

\item [{$Q_{12})$}] If $n(\Delta_s)=1$,
                    $\{$let $w \in \Delta_s$;
                    $s_1=e_q \cap w$;
                    if $|\pi_{LG}(x_s,\alpha(e_q))|<|\pi_{LG}(x_s,\beta(e_q))|$,
                    $s_2=\alpha(e_q)$, otherwise $s_2=\beta(e_q)\}$.
                    This can be done by link distance queries of Step $P_0$.

                    If $n(\Delta_s)=2$, $\{$let $w,w' \in \Delta_s$;
                    $s_1=e_q \cap w$; $s_2=e_q \cap w'\}$.

                    Similarly, $t_1$ and $t_2$ are computed.

\item [{$Q_{13})$}] If $n(\Delta_s)=n(\Delta_t)=0$, $X=e_q$ (like $C(a)$), otherwise,
                    if $n(\Delta_s)=0$, $X=\overline{t_1t_2}$ (like $C(b)$), otherwise,
                    if $n(\Delta_t)=0$, $X=\overline{s_1s_2}$ (like $C(b)$), otherwise,
                    if $\overline{s_1s_2} \cap \overline{t_1t_2} \neq \emptyset$,
                    $X=\overline{s_1s_2} \cap \overline{t_1t_2}$ (like $C(c_{3_2})$),
                    otherwise, if $(n(I)=0$ or $I \subset P-p)$,
                    $X=(\overline{s_1s_2}$ or $\overline{t_1t_2})$ (like $C(c_1)$ or $C(c_2)$),
                    otherwise, $X$ will be a point in $I$ (like $C(c_{3_1})$).
                    For all these cases $\lvert \pi_{LG}(s,X) \rvert + \lvert \pi_{LG}(X,t) \rvert$ is reported.
                    This can be done by the data structure of Step $P_0$.
\end{enumerate}

The main difference between triple query and single query is the computation of $\Delta_s$ and $\Delta_t$.
Indeed, in Step $Q_7$, the windows $w_j$ and $w_k$ are specified for the point $x_s \in L_s$. If they intersect
$e_q$, we update them.
On the other hand, in Step $Q_8$, the windows of $SPM_A(L_s)$, which are candidates for intersection with $e_q$ are specified (including the windows that cross the endpoints of $e_q$).
These windows are added to the set $\Delta_s$ if they intersect $e_q$ after updating.
Further, in Step $Q_9$, the final $\Delta_s$ is computed ($n(\Delta_s) \leq 2$).
The other steps are similar to single query.

To analyse the time complexity of this phase of the algorithm,
it is easy to see that Steps $Q_6$, $Q_7$, $Q_8$, $Q_9$ and $Q_{11}$
can be done in constant time while others require $O(\log n)$ time (like the time complexity of single query).

For computing $MLP(s,t,q)$, we must report $\pi_{LG}(s,t)$ instead of $\lvert \pi_{LG}(s,t) \rvert$
in Steps $Q_2$ and $Q_3$.
Also, Step $Q_{13}$ must be modified to report $\pi_{LG}(s,x)$ appended by $\pi_{LG}(x,t)$,
where $x \in X$ or $x=X$.
Thus, the following theorem is proved:

\begin{theorem}
Given a simple polygon $P$ with $n$ vertices,
we can preprocess it in time $O(n^3)$ so that
for query points $s$, $t$ and $q$ inside $P$, one can find $\lvert MLP(s,t,q) \rvert$
in $O(\log n)$ time. Further, $MLP(s,t,q)$ can be reported in an additional time $O(\lvert MLP(s,t,q) \rvert)$.
\end{theorem}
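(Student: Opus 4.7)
The plan is to verify the preprocessing cost and then follow the twelve-step query procedure, showing that every step takes $O(\log n)$ time or less, and finally to adapt the scheme so that not only $\lvert MLP(s,t,q) \rvert$ but the full path can be output in the claimed time. For the preprocessing bound, I would first argue that the standard auxiliary structures (shortest-Euclidean-path queries, link-distance queries, point location, visibility graph and its atomic segments) fit in $O(n^2)$ or $O(n^2\log n)$ time; the dominant cost comes from ranging over the $O(n^2)$ atomic segments $L$ and building $SPM_A(L)$ for each, which by Corollary~\ref{Additional_Info} takes $O(n)$ per segment, for a total of $O(n^3)$. I would also preprocess each $SPM_A(L)$ for point-location queries via \cite{Edelsbrunner_1986} within the same per-segment $O(n)$ bound, so the aggregate preprocessing stays at $O(n^3)$.

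For the query bound, I would follow the flow outlined in the main idea (Section~4): Step $Q_1$ invokes the shortest-Euclidean-path data structure to obtain $u_i,v_i$ in $O(\log n)$; Steps $Q_2$–$Q_3$ dispatch cases $Q(a)$/$Q(b)$ in $O(\log n)$ using link-distance queries, or compute $e_q$ via the visibility graph and a ray shot from $\alpha(e_q)$ as in Corollary~\ref{Eq_Computation}. The key novelty is Step $Q_4$: because $s$ (resp.\ $t$) is now a query point, I pull back to a fixed atomic segment by extending $\vv{u_1 s}$ to an edge of $P$ and binary-searching to locate the atomic segment $L_s$ containing the hit point $x_s$; this grants access to the prebuilt $SPM_A(L_s)$ and $WT(L_s)$. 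Steps $Q_5$–$Q_{10}$ then identify the at-most-two candidate windows that can meet $e_q$ (namely the windows whose combinatorial types are compatible with $\alpha(e_q)$ and $\beta(e_q)$, plus those recorded in $\delta(e_q,L_s)$), update their endpoints in constant time by evaluating the projection functions $f_i$ at $x_s$, and discard those that no longer meet $e_q$. Each such update costs $O(1)$ and there are at most four candidates total, so the whole block is $O(\log n)$ dominated only by point location.

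The main obstacle I anticipate is to justify that the last few steps ($Q_{11}$–$Q_{13}$), which emulate the $C(a)$–$C(c_{3_2})$ case analysis from Section~4.2, remain correct once we no longer have access to the full cell values $\lvert Ce_p(i)\rvert$ (since updating all windows of a cell could cost $\Omega(n)$). The point to make is that in every case the optimal bending locus is forced to lie on the short combinatorial object we can actually compute in $O(\log n)$: either on $e_q$ itself, on $\overline{s_1s_2}$, on $\overline{t_1t_2}$, on their intersection, or at a point of $I$. Then $|MLP(s,t,q)|$ is obtained by picking any representative $X$ in this locus and returning $|\pi_{LG}(s,X)|+|\pi_{LG}(X,t)|$ via the link-distance data structure; correctness follows directly from the case analysis of Section~4.2 applied to the updated windows produced by $SPM_A$, and the bend-count argument of Lemma~3.1 of \cite{Zarrabi_2019} assures there is no shorter $q$-visible alternative.

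Finally, for the reporting version I would replace each link-distance query in Steps $Q_2$, $Q_3$ and $Q_{13}$ by the corresponding greedy minimum-link-path construction $\pi_{LG}$, which produces the path in time proportional to its number of links plus the $O(\log n)$ search overhead. Concatenating $\pi_{LG}(s,x)$ with $\pi_{LG}(x,t)$ gives a $q$-visible path of length $|MLP(s,t,q)|$; since each subpath is produced in $O(m+\log n)$ time by walking the window tree from the root, the total reporting cost is $O(\lvert MLP(s,t,q) \rvert + \log n)$, which matches the claim and completes the proof of the theorem.
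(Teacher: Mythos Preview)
Your proposal is correct and follows essentially the same approach as the paper: you invoke Corollary~\ref{Additional_Info} over the $O(n^2)$ atomic segments for the $O(n^3)$ preprocessing, walk through the query steps $Q_1$--$Q_{13}$ with the same $O(1)$/$O(\log n)$ split, correctly flag that cell values cannot be maintained in the triple-query setting (so the locus is restricted to a segment on $e_q$ or a point of $I$), and obtain the reporting bound by swapping link-distance queries for greedy path construction. Aside from a harmless miscount (there are thirteen steps, not twelve), this is the paper's argument.
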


\section{Conclusion}
We presented two algorithms to find a $q$-$visible$ path between two points inside a simple polygon with $n$ vertices for single query and triple query.
The proposed algorithms run with $O(n)$ and $O(n^3)$ preprocessing time
for each of the cases, respectively,
and answer a link distance query in $O(\log n)$ time for both cases.
Further, a constrained minimum link path can be reported in an additional time proportional to the number of links for either case.

One possible direction for further research on this problem is to consider the same topic in other domains such as
polygonal domains or polyhedral surfaces.
Another direction is to require the path in the query form
to visit a more complex object like a simple polygon (\emph{convex} or \emph{non-convex}), as opposed to a point.
In this case, if the shape of the query object is fixed, one can find a $Q$-$visible$ path for the object $Q$ while it translates or rotates inside a simple polygon, i.e., the desired path should have
a non-empty intersection with $Q$.

\subsection*{Acknowledgements}
The author wish to thank Dr Ali Gholami Rudi from Babol Noshirvani University and Dr Ali Rajaei from Tarbiat Modares University, Computer Sciences group,
for many pleasant discussions and valuable remarks.
Also, i would like to thank the anonymous referees for their valuable comments.

\bibliographystyle{fundam}

\end{document}